\documentclass{article}
\usepackage[utf8]{inputenc}
\usepackage[dvipdfmx]{graphicx}
\usepackage{fullpage}
\usepackage{amsthm}
\usepackage{amsmath}
\usepackage{amssymb}
\usepackage{mathtools}
\usepackage[colorlinks,citecolor=blue,bookmarks=true,linktocpage]{hyperref}
\usepackage{cleveref}
\usepackage[backend=biber,style=numeric,sorting=nyt,url=false,arxiv=abs,maxbibnames=100,isbn=false]{biblatex}
\addbibresource{references.bib}
\usepackage{url}

\usepackage[linesnumbered,ruled,vlined]{algorithm2e}
\SetKwInOut{Input}{input}
\SetKwInOut{Output}{output}
\usepackage{amsmath}
\usepackage{amssymb}
\usepackage{amsfonts}
\usepackage{amsthm}
\usepackage{cleveref}
\usepackage{stmaryrd}
\usepackage{setspace}
\usepackage{tabularx}
\usepackage{url}
\usepackage[subrefformat=parens]{subcaption}
\captionsetup{compatibility=false}
\captionsetup[subfigure]{labelformat=simple}

\usepackage{tikz}
\usetikzlibrary{
  positioning,
  calc,
}
\usepackage{lipsum}
\usepackage[shortlabels]{enumitem}
\usepackage{cleveref}
\usepackage{amsmath,amssymb,amsthm}
\usepackage{mathtools}

\newcommand{\source}[1]{{{#1}^{\mathsf s}}}
\newcommand{\sink}[1]{{#1}^{\mathsf t}}


\newtheorem{theorem}{Theorem}
\newtheorem{lemma}{Lemma}
\newtheorem{corollary}{Corollary}

\newtheorem{claim}{Claim}

\theoremstyle{definition}
\newtheorem{remark}{Remark}







\begin{document}

\title{Reconfiguring (non-spanning) arborescences\thanks{
This is a post-peer-review, pre-copyedit version of an article published in Theoretical Computer Science. The final authenticated version is available online at \protect\url{https://doi.org/10.1016/j.tcs.2022.12.007}.
A preliminary version of this paper~\cite{IIKNOW:C0C00N:2021} has appeared in the proceedings of the 27th International Computing and Combinatorics Conference (COCOON 2021).
Partially supported by JSPS KAKENHI grant numbers
JP18H04091,
JP18K11168,
JP18K11169,
JP19K11814,
JP20K19742,
JP19K20350,
JP19J21000,
JP20H05793,
JP20H05795,
JP20K23323,
JP21K11752,
and JST, CREST Grant Number
JPMJCR18K3,
Japan.
}}

\author{Takehiro Ito\thanks{Graduate School of Information Sciences, Tohoku University, Japan. Email: \texttt{takehiro@tohoku.ac.jp}} \and
Yuni Iwamasa\thanks{Graduate School of Informatics, Kyoto University, Japan. Email: \texttt{iwamasa@i.kyoto-u.ac.jp}} \and
Yasuaki Kobayashi\thanks{Graduate School of Information Science and Technology, Hokkaido University. Email: \texttt{koba@ist.hokudai.ac.jp}} \and
Yu Nakahata\thanks{Division of Information Science, Nara Institute of Science and Technology, Japan. Email: \texttt{yu.nakahata@is.naist.jp}} \and
Yota Otachi\thanks{Graduate School of Informatics, Nagoya University, Japan. Email: \texttt{otachi@nagoya-u.jp}} \and
Kunihiro Wasa\thanks{Faculty of Science and Engineering, Hosei University, Japan. Email: \texttt{wasa@hosei.ac.jp}}
}

\maketitle

\begin{abstract}
In this paper, we investigate the computational complexity of subgraph reconfiguration problems in directed graphs.
More specifically, we focus on the problem of reconfiguring arborescences in a digraph, where an arborescence is a directed graph such that its underlying undirected graph forms a tree and
all vertices have in-degree at most 1.
Given two arborescences in a digraph, the goal of the problem is to determine whether there is a (reconfiguration) sequence of arborescences between the given arborescences such that each arborescence in the sequence can be obtained from the previous one by removing an arc and then adding another arc.
We show that this problem can be solved in polynomial time, whereas the problem is PSPACE-complete when we restrict arborescences in a reconfiguration sequence to directed paths or relax to directed acyclic graphs.
We also show that there is a polynomial-time algorithm for finding a shortest reconfiguration sequence between two spanning arborescences.
\end{abstract}

\section{Introduction}\label{sec:intro}
Let $\Pi$ be a graph structure property. 
For a graph $G$, we denote by $\mathcal{S}_{\Pi}(G)$ the set of all subgraphs of $G$ that satisfy $\Pi$ and have the same number of edges.
In this paper, we study the reachability of the solution space formed by $\mathcal{S}_{\Pi}(G)$, where two subgraphs $H$ and $H'$ in $\mathcal{S}_{\Pi}(G)$ are \emph{adjacent} in the solution space if and only if they can be obtained from each other by swapping a pair of edges, that is, $|E(H) \setminus E(H')| = |E(H') \setminus E(H)| = 1$.
Our target is to decide whether there is a (reconfiguration) sequence of adjacent subgraphs in $\mathcal{S}_{\Pi}(G)$ between two given subgraphs $\source{H}$ and $\sink{H}$ in $\mathcal{S}_{\Pi}(G)$.
To avoid confusion, we sometimes call the problem the \emph{reachability variant}, because we will study the shortest sequence variant later.

The problem has been studied for several graph structure properties $\Pi$ (on undirected graphs), although most of the related results appear under the name of the property $\Pi$ under consideration. 
For example, {\sc Spanning Tree Reconfiguration} can be seen as the problem when $\Pi$ is the property of being a spanning tree.
Every instance of this problem is a yes-instance because the set of spanning trees is the family of bases of a matroid~\cite{IDHPSUU11}.
Ito et~al.~\cite{IDHPSUU11} showed that when $\Pi$ is the property of being a matching the problem is solvable in polynomial time, and M\"uhlenthaler~\cite{Muh15} extended the result to degree-constrained subgraphs.
Hanaka et~al.~\cite{HIMMNSSV20} introduced the framework of subgraph reconfiguration problems, and studied the problem for several properties $\Pi$, including trees and paths.
In particular, they showed that when $\Pi$ is the property of being a tree, every instance of the problem is a yes-instance unless two input trees have different numbers of edges.
Motivated by applications in motion planning, Biasi and Ophelders~\cite{BO18}, Demaine~et al.~\cite{DEHJLUU19}, and Gupta et~al.~\cite{GSZ20} studied some variants of reconfiguring undirected paths.
These variants are shown to be PSPACE-complete in general, while they are fixed-parameter tractable when parameterized by the length of input paths.

In contrast to various results for undirected graphs, the problem was not well-studied for directed graphs.
In this paper, we investigate the complexity of subgraph reconfiguration problems on directed graphs.
We mainly study the problem when $\Pi$ is the property of being an arborescence, where an arborescence is a directed graph such that its underlying undirected graph forms a tree and every vertex except for exactly one vertex has in-degree $1$.
Note that two (directed) subgraphs in $\mathcal{S}_{\Pi}(G)$ are \emph{adjacent} if and only if they can be obtained from each other by swapping a pair of arcs (instead of a pair of edges). 
We refer to this problem as {\sc Arborescence Reconfiguration}.
(Formal definitions will be given in \Cref{sec:preli}.)
Interestingly, {\sc Arborescence Reconfiguration} has no-instances as shown in \Cref{fig:no-instance}, in contrast to the fact that any two undirected trees are reconfigurable as long as they have the same number of edges~\cite{HIMMNSSV20}.
Nonetheless we give the following theorem, as our main result.
\begin{theorem}\label{thm:dtr}
    Let $G = (V,A)$ be a directed graph. 
    \textsc{Arborescence Reconfiguration} can be solved in $O(|V||A|)$ time.
    Moreover, if the answer is affirmative, we can construct a reconfiguration sequence between two given arborescences of length $O(|V|^2)$ in polynomial time.
\end{theorem}

We further investigate the problem for specific arborescences.
By the definition, an arborescence has a unique vertex $r$ whose in-degree is $0$.
We call $r$ the \emph{root} of the arborescence, and call an arborescence with root $r$ an \emph{$r$-arborescence}. 
We will show that any two $r$-arborescences are reconfigurable when $\Pi$ is the property of being an $r$-arborescence with a prescribed vertex $r$.
This result gives an interesting contrast to {\sc Arborescence Reconfiguration} (recall the no-instance in \Cref{fig:no-instance}), and will play an important role in our proof of Theorem~\ref{thm:dtr}. 

\begin{figure}[t]
    \centering
    \includegraphics[width=0.2 \linewidth]{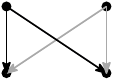}
    \caption{There is no reconfiguration sequence between the black and gray arborescences.}
    \label{fig:no-instance}
\end{figure}

We also consider the cases where $\Pi$ is the property of being a directed path, a directed acyclic graph (DAG), or a strongly connected graph.
Formal definitions will be given in \Cref{sec:hard}.
For these cases, we show negative results.
Our results are summarized in \Cref{tab:result}.

\begin{table}[t]
    \centering
    \caption{Summary of our results. For cases marked with `*', the answer is affirmative if and only if two given subgraphs have the same number of arcs.}
    \label{tab:result}
\begin{tabular}{|c|c|c|}
\hline
Property $\Pi$                      & Reachability variant & Shortest sequence variant \\ \hline\hline
arborescence              & P           & open             \\ \hline
$r$-arborescence          & always yes*          & open             \\ \hline
spanning arborescence     & always yes           &
\begin{tabular}{c}
P\\ (with $|A(\source{T}) \setminus A(\sink{T})|$ steps)
\end{tabular}\\ \hline
spanning $r$-arborescence & always yes           & 
\begin{tabular}{c}P\\ (with $|A(\source{T}) \setminus A(\sink{T})|$ steps)
\end{tabular}
\\ \hline
directed path              & PSPACE-complete      & -                \\ \hline
directed acyclic graph     & PSPACE-complete      & -                \\ \hline
strongly connected vertex set & PSPACE-complete              & -             \\ \hline
strongly connected arc set & NP-hard              & -             \\ \hline
\end{tabular}
\end{table}

In this paper, we also study the shortest sequence variant, which computes the shortest sequence of a reconfiguration sequence between two given subgraphs in $\mathcal{S}_{\Pi}(G)$. 
In particular, {\sc Spanning Arborescence Shortest Reconfiguration} is the shortest sequence variant when $\Pi$ is the property of being a spanning arborescence.
We will prove the following theorem, by constructing a reconfiguration sequence between two spanning arborescences $\source{T}$ and $\sink{T}$ of length $|A(\source{T}) \setminus A(\sink{T})| = |A(\sink{T}) \setminus A(\source{T})|$.
\begin{theorem}\label{thm:sdstr}
    {\sc Spanning Arborescence Shortest Reconfiguration} can be solved in polynomial time.
\end{theorem}

When $\Pi$ is the property of being a spanning arborescence, the reachability variant can be seen as a special case of {\sc Matroid Intersection Reconfiguration}
for a graphic matroid and (a truncation of) a partition matroid.
Here, given two matroids and their two common bases $\source{B}$ and $\sink{B}$, {\sc Matroid Intersection Reconfiguration} asks to determine if there is a reconfiguration sequence of common bases between $\source{B}$ and $\sink{B}$;
see \cite{book/Oxley11} for matroids.
It is shown in \cite{IDHPSUU11} that {\sc Maximum Bipartite Matching Reconfiguration} is solvable in polynomial time.
While this problem can be seen as {\sc Matroid Intersection Reconfiguration} for two (truncations of) partition matroids, the complexity of {\sc Matroid Intersection Reconfiguration} remains open. 
\Cref{thm:sdstr} provides a new tractable class of {\sc Matroid Intersection Reconfiguration}, particularly, its shortest sequence version.

\paragraph{Paper organization}
In \Cref{sec:preli}, we define some notation and terminology used in this paper.
\Cref{sec:vari} gives some positive results for variants of {\sc Arborescence Reconfiguration}.
Based on one of these results in \Cref{sec:vari}, we develop a polynomial-time algorithm for {\sc Arborescence Reconfiguration} in \Cref{sec:dtr}, proving \Cref{thm:dtr}.
Then, we give some negative results in \Cref{sec:hard}, and then conclude our paper in the last section.

\section{Preliminaries}\label{sec:preli}

Let $G = (V, A)$ be a directed graph.
We denote by $V(G)$ and $A(G)$ the vertex and arc sets of $G$, respectively.
Let $e = (u, v)$ be an arc of $G$.
We say that $e$ is \emph{directed from $u$} or \emph{directed to $v$}.
The vertex $u$ (resp.\ $v$) is called the \emph{tail} (resp.\ \emph{head}) of $e$.
For each $v \in V$, we denote by $N^+_G(v)$ the set of out-neighbors of $v$ in $G$, i.e., $N^+_G(v) = \{w \in V : (v, w) \in A\}$.
The \emph{in-degree} (resp.\ \emph{out-degree}) of $v$ is the number of arcs directed to $v$ (resp.\ directed from $v$) in $G$.
For a subset $X \subseteq V$, the subgraph of $G$ induced by $X$ is denoted by $G[X]$.
For an arc $(u, v) \in G$ and a subgraph $H$ of $G$,
we denote by $H + (u,v)$ and $H - (u,v)$ the directed graphs obtained from $H$ by adding $(u, v)$ and by removing $(u, v)$, respectively.

An \emph{arborescence} $T$ is a directed graph such that its underlying undirected graph forms a tree and every vertex except for a vertex $r \in V(T)$ has in-degree exactly $1$.
The unique vertex $r$ of in-degree $0$ is called the \emph{root} of $T$, and $T$ is called an \emph{$r$-arborescence}.
An ($r$-)arborescence $T$ that is a subgraph of $G$ is said to be \emph{spanning} if $V(T) = V(G)$.\footnote{Let us note that the term ``arborescences'' may be used in different meanings, where the definition of ``arborescence'' sometimes requires to be a spanning subgraph.
In our definition, arborescences are not required to be spanning subgraphs.
We call arborescences that are spanning subgraphs \emph{spanning arborescences}.
}
A directed graph consisting of a disjoint union of arborescences is called a \emph{branching} or an \emph{$R$-branching}, where $R$ is the set of roots of its (weakly) connected components.
An arc in an arborescence $T$ is called a \emph{leaf arc} if the out-degree of its head is $0$ in $T$. 
A \emph{directed path} is an arborescence that has at most one leaf arc.

Let $\Pi$ be a graph structure property. 
For a graph $G$, we denote by $\mathcal{S}_{\Pi}(G)$ the set of all subgraphs of $G$ that satisfy $\Pi$.
Let $H$ and $H'$ be two subgraphs in $\mathcal{S}_{\Pi}(G)$ that have the same size.
A sequence $\langle H_0, H_1, \ldots, H_\ell\rangle$ of subgraphs in $\mathcal{S}_{\Pi}(G)$ is called a \emph{reconfiguration sequence between $H$ and $H'$} if $H_0 = H$, $H_\ell = H'$, and $|A(H_i) \setminus A(H_{i + 1})| = |A(H_{i + 1}) \setminus A(H_i)| = 1$ for all $i$, $0 \le i < \ell$.
In other words, $H_{i + 1}$ can be obtained by removing an arc from $H_i$ and then adding another arc to it for each $i$ with $0 \le i < \ell$.
We call $\ell$ the \emph{length} of the reconfiguration sequence.
If there is a reconfiguration sequence between $H$ and $H'$, we say that $H$ is \emph{reconfigurable} from $H'$.
Note that any reconfiguration sequence is reversible: 
$H'$ is reconfigurable from $H$ if and only if $H$ is reconfigurable from $H'$.
For simplicity, we assume without loss of generality that all subgraphs in $\mathcal{S}_{\Pi}(G)$ have the same size; otherwise they are not reconfigurable. 

\section{Always Reconfigurable Cases}\label{sec:vari}

In this section, we show that every instance of the reachability variant is a yes-instance for some graph properties $\Pi$.

\subsection{Branchings}

Let $\mathcal S \subseteq 2^U$ be a collection of subsets of a finite set $U$.
Suppose that every set in $\mathcal S$ has the same cardinality.
We say that $\mathcal S$ satisfies the \emph{weak exchange property}\footnote{Note that our definition of weak exchange property is different from \emph{weak exchange axiom}, which is introduced for M-convex functions in Murota's book~\cite[p.\ 137]{Murota03}.} if for $S, S' \in \mathcal S$ with $S \neq S'$, there exist $e \in S \setminus S'$ and $e' \in S' \setminus S$ such that $S \setminus \{e\} \cup \{e'\} \in \mathcal S$.
This property is closely related to the 
exchange property of bases of matroids: Recall that if $\mathcal B$ is the collection of bases of a matroid, then for $B, B' \in \mathcal B$ with $B \neq B'$ and for $e \in B \setminus B'$, there is $e' \in B' \setminus B$ such that $B \setminus \{e\} \cup \{e'\} \in \mathcal B$.
The weak exchange property is not only a weaker version of the exchange property but also gives an important consequence for reconfiguration problems in the following sense, which can be easily observed.
\begin{lemma}\label{lem:WEP}
Let $\Pi$ be a graph structure property.
    All two subgraphs $H$ and $H'$ in $\mathcal{S}_{\Pi}(G)$ admit a reconfiguration sequence of length $|A(H') \setminus A(H)| = |A(H) \setminus A(H')|$
    if and only if
    $\mathcal{S}_{\Pi}(G)$ satisfies the weak exchange property.
\end{lemma}

Since the lower bound of the length of a reconfiguration sequence between $H$ and $H'$ is clearly $|A(H') \setminus A(H)| = |A(H) \setminus A(H')|$,
\Cref{lem:WEP} implies that,
if $\mathcal{S}_{\Pi}(G)$ satisfies the weak exchange property
then the shortest sequence variant can be solved in polynomial time for the property $\Pi$.

In this subsection, we show that $\mathcal{S}_{\Pi}(G)$ satisfies the weak exchange property for some graph structure properties $\Pi$.
We first show that, similar to the undirected case~\cite{IDHPSUU11}, the weak exchange property holds when $\Pi$ is the property of being a spanning arborescence.
\begin{theorem}
\label{thm:dspntree}
 $\mathcal{S}_{\Pi}(G)$ satisfies the weak exchange property when $\Pi$ is the property of being a spanning arborescence.
\end{theorem}
\begin{proof}
Let $T$ and $T'$ be arbitrary spanning arborescences in $G$ with $T \neq T'$.
Suppose first that $T$ and $T'$ have a common root $r$.
Let $e' = (u,v)$ be an arc in $T' \setminus T$ such that the path from $r$ to $u$ in $T'$ is contained in $T$.
Clearly, we have $v \neq r$.
Let $e$ be the unique arc directed to $v$ in $T$.
From the definition of $e$ and $e'$, we have $e \ne e'$.
Let $R = T + e' - e$.
Now in $T + e'$, the vertex $v$ is the only vertex that has two arcs ($e$ and $e'$) directed to it.
Thus, in $R$, no vertex has in-degree 2 or more.
Moreover, all vertices in $R$ are reachable from $r$:
the paths in $T$ that use $e$ are rerouted to use $e'$ in $R$,
and all other paths in $T$ still exist in $R$.
Since $|T| = |R|$, $R$ is a spanning arborescence in $G$.

Suppose next that $T$ and $T'$ have different roots $r$ and $r'$, respectively.
Let $e'$ be the unique arc in $T'$ directed to $r$, that is, $e' = (u,r)$ for some $u \in V$.
Let $P$ be the path from $r$ to $u$ in $T$.
Since $P + e'$ is a directed cycle, 
there is an arc $e = (v,w) \in P$ that does not belong to $T'$.
Let $R = T + e' - e$.
Observe that no vertex in $R$ has in-degree 2 or more
since it holds already in $P + e'$.
Observe also that all vertices in $R$ are reachable from $w$:
for the descendants of $w$ in $T$, $R$ contains the same path from $w$;
and for the other vertices, we first follow the path from $w$ to $u$ in $T$, 
use the arc $e' = (u,r)$, and then follow the path in $T$ from $r$.
Since $|T| = |R|$, $R$ is a spanning arborescence (rooted at $w$) in $G$.
\end{proof}

From the proof of \Cref{thm:dspntree}, we obtain the following corollary. 
\begin{corollary}\label{cor:r-span}
    $\mathcal{S}_{\Pi}(G)$ satisfies the weak exchange property when $\Pi$ is the property of being an spanning $r$-arborescence.
\end{corollary}

We then prove the following theorem, which implies that the shortest sequence variant is solvable in polynomial time when $\Pi$ is a branching.
\begin{theorem}
\label{theorem:dforest}
    $\mathcal{S}_{\Pi}(G)$ satisfies the weak exchange property when $\Pi$ is the property of being a branching.
\end{theorem}
\begin{proof}

Let $F$ and $F'$ be distinct branchings in $G$ with $|A(F)| = |A(F')|$.
We first consider the case where there is some arc $e' \in F' \setminus F$
such that the endpoints of $e'$ do not belong to the same (weakly) connected component of $F$, that is, either $e'$ connects two connected components of $F$ or at least one of the endpoints of $e'$ does not belong to $F$.
Now, we show that there is an arc $e \in F \setminus F'$ such that $F + e' - e$ is a branching of $G$. 
If $F + e'$ is a branching, then we can select any arc in $F \setminus F'$ as $e$.
Assume that $F + e'$ is not a branching.
By the assumption in this case, the underlying undirected graph of $F + e'$ contains no (undirected) cycle.
Thus there is a vertex of in-degree at least 2 in $F + e'$.
Since $F$ is a branching, only the head of $e'$, say $v$, can be such a vertex, and its in-degree is exactly 2.
As $e$, we select the other arc in $F + e'$ that has $v$ as its head.
Since $e' \in F' \setminus F$, this arc $e$ does not belong to $F'$.
Since $F + e' - e$ does not contain any cycle in the underlying graph nor any vertex of in-degree 2 or more,
it is a branching in $G$.

Next we consider the case where every arc $e' \in F' \setminus F$ has both endpoints in the same connected component of $F$.
Let $F_{1}, \dots, F_{c} \subseteq F$ be the connected components of $F$,
and let $F'_{1}, \dots, F'_{c} \subseteq F'$ be 
the subsets of $F'$ such that $F'_{i} = \{e' \in F' \mid e' \text{ has both endpoints in } F_{i}\}$.
We claim that $|A(F_{i})| = |A(F'_{i})|$.
To see this, observe that if $|A(F_{i})| < |A(F'_{i})|$ for some $i$,
then $F'_{i}$ is not an arborescence since $V(F'_{i}) \subseteq V(F_{i})$ and
$F_{i}$ is a spanning arborescence of the subgraph of $G$ induced by $V(F_{i})$.
This proves the claim as $|A(F)| = |A(F')|$.
Since both endpoints of every arc in $F'_i$ belong to $F_i$, we also have $V(F_i) = V(F'_i)$ for all $1 \le i \le c$. 
As $F \neq F_i$, there is a connected component $F_i$ in $F$ with $F_i \neq F'_i$ and by \Cref{thm:dspntree}, the theorem follows.
\end{proof}

By combining \Cref{lem:WEP} with \Cref{thm:dspntree},
\Cref{cor:r-span}, and \Cref{theorem:dforest}, we immediately obtain the following,
which particularly implies \Cref{thm:sdstr}.
\begin{theorem}
    Let $\Pi$ be one of the graph structure properties of being a spanning arborescence, a spanning $r$-arborescence, and a branching.
    Then, for any $H, H' \in \mathcal{S}_{\Pi}(G)$,
    there exists a reconfiguration sequence of length $|A(H') \setminus A(H)| = |A(H) \setminus A(H')|$.
\end{theorem}

As mentioned in \Cref{sec:intro}, when $\Pi$ is the property of being a spanning arborescence (or a branching), the reachability variant is a subclass of {\sc Matroid Intersection Reconfiguration}.
\Cref{thm:dspntree,theorem:dforest} give a new insight on matroid intersection in terms of the weak exchange property.

\begin{remark}
The family of all
$r$-arborescences in a graph is a typical example of greedoids; a set family $\mathcal S \subseteq 2^U$ of a finite set $U$
is called a \emph{greedoid}~\cite{KLS1991} if it satisfies that $\emptyset \in \mathcal S$ and for any $X, Y \in \mathcal S$ with $|X| < |Y|$, there is $y \in Y \setminus X$ such that $X \cup \{y\} \in \mathcal S$.
Lov\'{a}sz~\cite{lovasz1977homology} showed that,
if a graph is $2$-connected
then, for any two spanning $r$-arborescences $T$ and $T'$,
there exists a reconfiguration sequence $\langle T =: T_0, T_1, \ldots, T_\ell := T'\rangle$
such that, for each $i$, the arcs $e_{i-1} \in T_{i-1} \setminus T_i$ and $e_i \in T_i \setminus T_{i-1}$
are leaves in $T_{i-1}$ and in $T_i$, respectively.
See \cite[Theorem 2.11]{KLS1991} for its generalization to greedoids.

Lov\'{a}sz's result and ours are incomparable in the following sense.
Indeed, Lov\'{a}sz dealt with a more restricted reconfiguration rule (the arcs $e_{i-1} \in T_{i-1} \setminus T_i$ and $e_i \in T_i \setminus T_{i-1}$
must be leaves in $T_{i-1}$ and in $T_i$, respectively) than ours, but a digraph is required to be 2-connected for the reconfigurability of any two spanning $r$-arborescences. Furthermore, the length of any reconfiguration sequence between $T$ and $T'$ can be strictly larger than the lower bound $|A(T) \setminus A(T')|$ under Lov\'{a}sz's rule.
On the other hand, in our setting, the $2$-connectivity of a digraph is not required for the reconfigurability,
and there is always a reconfiguration sequence of length $|A(T) \setminus A(T')|$ between spanning $r$-arborescences $T$ and $T'$.
\end{remark}

\subsection{Branchings with fixed roots}

In this subsection, we consider the case where the property $\Pi$ is the property of being an $r$-arborescence for a fixed vertex $r$. 
Then, every instance of the reachability variant is a yes-instance and admits a reconfiguration sequence of linear length. 
More precisely, we prove the following theorem.

\begin{theorem}\label{thm:fixed-root}
    For every pair of $r$-arborescences $T$ and $T'$ in $G$ with $|A(T)| = |A(T')| = k$, there is a reconfiguration sequence $\langle T = T_0, T_1, \ldots, T_\ell = T' \rangle$ such that all intermediate arborescences have the same root $r$.
    Moreover, the length $\ell$ of the reconfiguration sequence is at most $k$.
\end{theorem}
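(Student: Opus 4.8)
The plan is to build the reconfiguration sequence greedily, inserting the arcs of $T'$ one at a time in an order compatible with the tree structure of $T'$, and to show that each insertion costs at most one exchange step. First I would fix an ordering $f_1, \dots, f_k$ of $A(T')$ such that $L_i := \{f_1, \dots, f_i\}$ is an $r$-directed tree for every $0 \le i \le k$ (with $L_0 = \emptyset$); such an ordering exists, e.g.\ by listing the arcs of $T'$ in the order a breadth-first search from $r$ discovers them, so that the tail of each $f_j$ is either $r$ or the head of some $f_{j'}$ with $j' < j$. The invariant I would maintain is that the current tree $T_i$ is an $r$-directed tree with $|A(T_i)| = k$ and $L_i \subseteq A(T_i)$, starting from $T_0 = T$. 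Then $T_k = T'$ (as $L_k = A(T')$ already has size $k$), so if each step uses at most one exchange the sequence has length at most $k$.

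The structural fact I would lean on is that $L_{i-1}$ sits inside $T_{i-1}$ as an \emph{upward-closed} rooted subtree: being a connected subgraph of the tree $T_{i-1}$ containing $r$, it contains, for every one of its vertices $w$, the entire path from $r$ to $w$ in $T_{i-1}$, and in particular the in-arc of $w$ in $T_{i-1}$ when $w \ne r$. Now consider step $i$ and write $f_i = (u_i, v_i)$. If $f_i \in A(T_{i-1})$, set $T_i = T_{i-1}$ with no exchange. Otherwise $u_i \in V(T_{i-1})$ (indeed $u_i \in V(L_{i-1})$ for $i \ge 2$ by the choice of ordering, and $u_1 = r$), whereas $v_i \notin V(L_{i-1})$, because the unique arc of $T'$ entering $v_i$ is $f_i$ itself.

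Two cases remain. If $v_i \in V(T_{i-1})$, take $e$ to be the in-arc of $v_i$ in $T_{i-1}$; then $e \ne f_i$ and $e \notin L_{i-1}$ (again because $f_i$ is the only arc of $T'$ into $v_i$), and since $L_{i-1}$ is upward-closed with $v_i \notin V(L_{i-1})$, no descendant of $v_i$ in $T_{i-1}$ lies in $V(L_{i-1})$; in particular $u_i$ is not a descendant of $v_i$ (trivially so when $i = 1$, since $u_1 = r$). Hence $T_i := T_{i-1} - e + f_i$ merely re-parents the subtree rooted at $v_i$ onto $u_i$; it has no cycle, all in-degrees at most $1$ with $r$ the unique source, and contains $L_{i-1} \cup \{f_i\} = L_i$. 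If instead $v_i \notin V(T_{i-1})$, I would produce a leaf arc $g$ of $T_{i-1}$ with $g \notin L_{i-1}$: since $|L_{i-1}| = i - 1 < k = |A(T_{i-1})|$, some arc of $T_{i-1}$ lies outside $L_{i-1}$, and following arcs downward from it in $T_{i-1}$ stays outside $L_{i-1}$ (by upward-closedness) until a leaf of $T_{i-1}$ is reached; its in-arc is such a $g$, and $g$ is not the in-arc of $u_i$ because the latter lies in $L_{i-1}$ (or is absent when $u_i = r$). Then $T_i := T_{i-1} - g + f_i$ deletes a leaf of $T_{i-1}$ and attaches the new leaf $v_i$ to $u_i$, giving an $r$-directed tree of size $k$ containing $L_i$.

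The step I expect to be the crux is the case $v_i \notin V(T_{i-1})$: one must always find a deletable leaf arc of $T_{i-1}$ that is neither already locked in (in $L_{i-1}$) nor the in-arc of the attachment vertex $u_i$. The sketch above resolves this by combining the cardinality gap $|L_{i-1}| < |A(T_{i-1})|$ with the upward-closedness of $L_{i-1}$ in $T_{i-1}$, which simultaneously guarantees the existence of such a leaf arc and keeps it distinct from the in-arc of $u_i$. Finally, note that the bound $k$ (rather than $|A(T) \setminus A(T')|$) is essentially best possible for this kind of argument: the family of $r$-directed trees of a fixed size need not satisfy the weak exchange property — for example, with $T = \{(r,a),(a,b),(b,c)\}$ and $T' = \{(r,z),(z,b),(b,c)\}$, none of the candidate single exchanges produces an $r$-directed tree, so no single exchange can even decrease $|A(T)\setminus A(T')|$.
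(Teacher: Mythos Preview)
Your proof is correct and follows essentially the same approach as the paper's. Both arguments process the arcs of $T'$ top-down from the root, maintain that the already-processed arcs sit inside the current tree as an upward-closed subtree, and handle each new arc $f_i=(u_i,v_i)$ by the same two-case analysis: if $v_i$ is already present, swap out its current in-arc; if not, swap out a leaf arc lying outside the locked subtree. The paper packages this as an induction on the number of ``unfixed'' arcs (arcs whose root-path in the current tree does not yet match $T'$), whereas you fix an explicit BFS ordering and carry the invariant $L_i\subseteq A(T_i)$; these are two phrasings of the same mechanism. Your treatment is slightly more explicit in one spot---you spell out why the deletable leaf arc in the second case cannot be the in-arc of $u_i$---and your closing example illustrating that the weak exchange property fails is a nice addition not present in the paper.
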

\begin{proof}
We say that an arc $e$ in an arborescence $T''$ is \emph{fixed} (with respect to $T'$) if the directed path from $r$ to the head of $e$ in $T''$ appears in $T'$.
An arc is \emph{unfixed} if it is not fixed.
Let $h$ be the number of unfixed arcs in $T$.
We prove that there is a reconfiguration sequence between $T$ and $T'$ of length at most $h$ by induction on $h$. 
If $h=0$, then we have $T = T'$.
In the following, we assume that $h \ge 1$ and that for every $r$-arborescence $T''$ that has $k$ arcs and contains fewer than $h$ unfixed arcs with respect to $T'$, there is a reconfiguration sequence from $T''$ to $T'$ of length at most $h - 1$.

Let $e = (u,v)$ be an arc in $T'$ such that $e$ is not included in $T$ 
but all other arcs in the path $P$ from $r$ to the tail of $e$ in $T'$ are included in $T$.
Such an arc exists since $T \ne T'$ and they share the root $r$.
Note that all arcs in $P$ are fixed.

Assume for now that there is an unfixed arc $f$ in $T$ such that $T'' \coloneqq T + e - f$ is an arborescence in $G$.
Note that $T''$ is still rooted at $r$ since $e$ is an arc of an arborescence rooted at $r$.
Observe that arc $e$ is fixed in $T''$ as both $T''$ and $T'$ contain the path $P$ and that the fixed arcs of $T$ remain fixed in $T''$ since we only removed the unfixed arc $f$.
Thus $T''$ has fewer than $h$ unfixed arcs. 
By the induction hypothesis, there is a reconfiguration sequence from $T''$ to $T'$ of length at most $h - 1$, and thus $T'$ is reconfigurable from $T$ as $|A(T) \setminus A(T'')| = |A(T'') \setminus A(T)| = 1$.
Therefore, it suffices to find such an arc $f$.

If the head $v$ of $e$ is included in $T$,
then we set $f$ to the arc directed to $v$ in $T$.
Then $f$ is unfixed since $T'$ cannot contain it
and $T + e - f$ is an arborescence obtained from $T$ by changing the parent of $v$ to $u$.
Otherwise, $v$ is not included in $T$,
then we set $f$ to an unfixed leaf arc of $T$, which exists since $h \ge 1$.
Since $T + e$ is an arborescence
and $f$ is a leaf arc of $T + e$ as well, $T + e - f$ is an arborescence.
\end{proof}

This result can be extended to $R$-branchings.

\begin{theorem}
    For every pair of $R$-branchings $F$ and $F'$ in $G$ with $|A(F)| = |A(F')| = k$, there is a reconfiguration sequence $\langle F = F_0, F_1, \ldots, F_\ell = F' \rangle$ such that all intermediate forests are $R$-directed.
    Moreover, the length $\ell$ of the reconfiguration sequence is at most $k$.
\end{theorem}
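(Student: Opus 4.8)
The plan is to reduce this statement to \Cref{thm:fixed-root} by attaching a common super-root to all components of the forest. Introduce a fresh vertex $\rho \notin V(G)$ and let $\hat G$ be the digraph obtained from $G$ by adding $\rho$ together with the arcs $\{(\rho,r) : r \in R\}$. For an $R$-directed forest $H$ in $G$, write $\hat H \coloneqq H + \{(\rho,r) : r \in R\}$; I claim $\hat H$ is a $\rho$-directed tree in $\hat G$ with $|\hat H| = |H| + |R|$. Indeed, adding $\rho$ and exactly one arc into each root $r \in R$ joins the components of $H$ into a single weakly connected acyclic digraph in which every vertex other than $\rho$ has in-degree exactly $1$. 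In particular $\hat F$ and $\hat{F'}$ are $\rho$-directed trees in $\hat G$ of size $k + |R|$.

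Next I would apply \Cref{thm:fixed-root} to $\hat F$ and $\hat{F'}$ in $\hat G$, obtaining a reconfiguration sequence $\langle \hat F = \hat F_0, \hat F_1, \ldots, \hat F_\ell = \hat{F'} \rangle$ of $\rho$-directed trees whose length $\ell$ is at most the number of \emph{unfixed} arcs of $\hat F$ with respect to $\hat{F'}$, in the terminology of that proof. The crucial observation is that each arc $(\rho,r)$ is fixed in $\hat F$: the directed path from $\rho$ to $r$ in $\hat F$ is the single arc $(\rho,r)$, which also occurs in $\hat{F'}$. Hence every unfixed arc of $\hat F$ lies in $F$, so $\ell \le |F| = k$. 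Moreover, inspecting the construction in the proof of \Cref{thm:fixed-root}, every step removes only an unfixed arc, and fixed arcs remain fixed along the sequence; therefore no arc $(\rho,r)$ is ever removed, i.e.\ $\{(\rho,r) : r \in R\} \subseteq A(\hat F_i)$ for all $i$.

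Finally I would set $F_i \coloneqq \hat F_i - \{(\rho,r) : r \in R\}$ and check that $\langle F_0, \ldots, F_\ell\rangle$ is the desired sequence. Since $\hat F_i$ is a $\rho$-directed tree whose only arcs incident to $\rho$ are the $|R|$ arcs $(\rho,r)$, all of which are present, deleting them isolates $\rho$ and splits the remaining tree into exactly $|R|$ directed trees, one rooted at each $r \in R$; hence $F_i$ is an $R$-directed forest in $G$, with $F_0 = F$ and $F_\ell = F'$. For consecutive terms, $\hat F_{i+1} = \hat F_i + e' - f$ where $f$ is unfixed, so $f \notin \{(\rho,r) : r \in R\}$, and $e'$ is absent from $\hat F_i$, so $e' \notin \{(\rho,r) : r \in R\}$ as well (these arcs are always present); thus $e', f \in A(G)$ and $F_{i+1} = F_i + e' - f$ is a single arc exchange. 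This yields the claim with $\ell \le k$.

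The main obstacle is precisely the bookkeeping in this reduction: one must be certain that the sequence produced by \Cref{thm:fixed-root} never disturbs the auxiliary arcs $(\rho,r)$, which is why I emphasize that the proof of that theorem deletes only unfixed arcs and that fixedness is preserved along the sequence. Everything else is routine verification that $H \mapsto \hat H$ is a bijection between $R$-directed forests of size $k$ in $G$ and $\rho$-directed trees of size $k + |R|$ in $\hat G$ that contain all arcs $(\rho,r)$.
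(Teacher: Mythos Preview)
Your argument is correct, but it differs from the paper's and is somewhat less economical. The paper does not add a super-root; instead it \emph{contracts} $R$ to a single vertex $r$ (after discarding arcs inside $R$, which can never lie in an $R$-directed forest). Under this contraction the arc set of an $R$-directed forest in $G$ is literally the arc set of an $r$-directed tree in the contracted multigraph $G'$, so \Cref{thm:fixed-root} applies as a black box: the resulting sequence has length at most $k$ and every intermediate arc set is simultaneously an $r$-directed tree in $G'$ and an $R$-directed forest in $G$. Your super-root construction is a perfectly valid alternative, but because it inflates the tree size to $k+|R|$ you cannot use \Cref{thm:fixed-root} as stated; you must reach into its proof to argue that the $|R|$ auxiliary arcs $(\rho,r)$ are fixed, are never exchanged, and do not count against the length bound. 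You handle this correctly, and the verification that deleting the auxiliary arcs from each $\hat F_i$ yields an $R$-directed forest is sound (the only arcs of $\hat G$ incident to $\rho$ are the $(\rho,r)$, all present in every $\hat F_i$, so removing them isolates $\rho$ and leaves exactly the vertices of $R$ with in-degree~$0$). The trade-off is clear: the contraction route avoids all this bookkeeping, while your route has the minor advantage of not introducing a multigraph.
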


\begin{proof}
    Since every arc between vertices in $R$ cannot belong to any $R$-branching, we assume that there are no arcs between them.
    Let $G'$ be a directed multigraph obtained from $G$ by identifying vertices in $R$ into a single vertex $r$. 
    Observe that a set $X \subseteq E$ forms an $R$-branching in $G$ if and only if $X$ is an $r$-arborescence of $G'$.
    By~\Cref{thm:fixed-root}, the theorem follows.
\end{proof}

\section{Algorithm for {\sc Arborescence Reconfiguration}}\label{sec:dtr}

This section is devoted to proving our main result, \Cref{thm:dtr}, which is a polynomial-time algorithm for {\sc Arborescence Reconfiguration}.
Recall that there are no-instances for the problem, as shown in \Cref{fig:no-instance}. 

The idea of our algorithm is as follows.
Let $G = (V, A)$ be a directed graph, and let $k$ be a positive integer.
For each $v \in V$, we denote by $\mathcal T(v)$ the collection of all $v$-arborescences $T$ in $G$ with $|A(T)| = k$.
By~\Cref{thm:fixed-root}, there is a reconfiguration sequence between any pair of $v$-arborescences in $\mathcal T(v)$ such that all internal arborescences in the sequence belong to $\mathcal T(v)$.
This enables us to ``compress'' all arborescences in $\mathcal T(v)$ into a single representative for each $v \in V$, and it suffices to seek the reachability in the ``compressed'' solution space. 
In the rest of this section, when we refer to reconfiguration sequences, every subgraph in these sequences are arborescences with $k$ arcs.

    Let $u$ and $v$ be distinct vertices in $G$, and let $T \in \mathcal T(u)$ and $T' \in \mathcal T(v)$.
\begin{lemma}\label{lem:adjacent}
    Suppose that $G$ has an arc $(u, v)$ or $(v, u)$.
    Then, there is a reconfiguration sequence between $T$ and $T'$.
\end{lemma}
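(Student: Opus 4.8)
The plan is to reduce the general case to the fixed-root situation handled by \Cref{thm:fixed-root}. We are given $T \in \mathcal{T}(u)$ and $T' \in \mathcal{T}(v)$ with $u \neq v$, and an arc between $u$ and $v$; by symmetry of reconfiguration sequences and by possibly swapping the roles of $T$ and $T'$, it suffices to treat the case where the arc is $(u,v)$. The key idea is to move from $T$ to some $u$-directed tree $T_1$ that ``almost'' looks like $T'$ but is rooted at $u$ rather than $v$, by using the arc $(u,v)$ as the parent arc of $v$; then a single additional reconfiguration step should re-root it at $v$; and finally \Cref{thm:fixed-root} carries $T_1$ (or the re-rooted version) to $T'$ within a fixed root class.

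\textbf{Key steps.} First I would consider the $v$-directed tree $T'$ rooted at $v$. Let $e_0$ be the unique leaf arc of... actually, more carefully: I would first produce from $T'$ a $u$-directed tree of size $k$ that differs from $T'$ in exactly one arc. Since $(u,v) \in A$, a natural candidate is $\hat T := T' + (u,v) - e$, where $e$ is a suitably chosen arc of $T'$. We need $\hat T$ to be a directed tree rooted at $u$. If $u \notin V(T')$, add $(u,v)$ to $T'$: the result has $u$ as a new source, $v$ now has in-degree $1$, and the underlying graph is still a tree on $k+1$ arcs; removing any leaf arc of $T'$ that is not incident to $v$'s path, or handling the small exceptional case, yields a $u$-directed tree $\hat T$ of size $k$ with $|\hat T \setminus T'| = 1$. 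If $u \in V(T')$, then adding $(u,v)$ creates a second in-arc at $v$ and possibly a cycle (namely if the $v$-to-$u$ path exists in $T'$): I would then delete the old parent arc of $v$ in $T'$, call it $f$, so that $\hat T = T' + (u,v) - f$ has every vertex of in-degree $\le 1$, still spans $V(T')$, has $u$ reaching $v$ and hence everything $v$ reached, while the part of $T'$ above and off $v$ is unchanged and still reachable from $v$ — wait, $v$ is no longer the root, so I must check all of $V(T')$ is reachable from $u$. This is exactly the re-rooting argument already used in the proof of \Cref{thm:dspntree}, second case, and it shows $\hat T$ is a $u$-directed tree; moreover $|\hat T \setminus T'| = 1$.

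\textbf{Finishing.} Once we have such a $u$-directed tree $\hat T$ with $\hat T$ and $T'$ differing in a single arc, we are done: $\hat T \in \mathcal{T}(u)$ and $T \in \mathcal{T}(u)$, so by \Cref{thm:fixed-root} there is a reconfiguration sequence from $T$ to $\hat T$ through $\mathcal{T}(u)$, then the single step $\hat T \to T'$ (valid since $|A(\hat T) \setminus A(T')| = |A(T') \setminus A(\hat T)| = 1$) reaches $T'$. Concatenating gives a reconfiguration sequence between $T$ and $T'$, as required.

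\textbf{Main obstacle.} The delicate part is the construction of $\hat T$ in the case $u \in V(T')$: one must choose the removed arc $f$ correctly (the old in-arc of $v$) and then verify that every vertex of $V(T') = V(\hat T)$ is reachable from the new root $u$ — the descendants of $v$ in $T'$ via the unchanged subtree, and every other vertex by first routing up to $v$... no: the other vertices are not below $v$, so one routes from $u$ up along what was the $v$-rooted structure. This re-rootability is precisely the content of the second case of the proof of \Cref{thm:dspntree}, applied to the subgraph of $G$ induced by $V(T')$ together with the arc $(u,v)$; I expect to invoke essentially that argument verbatim. A secondary nuisance is the boundary case $k=1$ and the cases where $u$ or $v$ is isolated in $T$ or $T'$, which must be checked separately but are routine.
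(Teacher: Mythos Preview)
Your overall strategy is exactly the paper's: build a $u$-directed tree $\hat T$ at distance one from $T'$, then invoke \Cref{thm:fixed-root} to connect $T$ and $\hat T$ inside $\mathcal T(u)$, and finish with the single step $\hat T \to T'$. Case~1 ($u \notin V(T')$) is handled correctly and matches the paper.

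In Case~2 ($u \in V(T')$), however, your explicit construction slips: you propose to delete ``the old parent arc of $v$ in $T'$,'' but $v$ is the \emph{root} of $T'$ and has no in-arc; adding $(u,v)$ gives $v$ its first in-arc, not a second. What actually happens is that $T' + (u,v)$ contains a unique directed cycle (the $T'$-path from $v$ to $u$ followed by the new arc $(u,v)$), and the arc to delete is the one on this cycle directed to $u$, i.e.\ the parent arc of $u$ in $T'$. With that correction $\hat T = T' + (u,v) - f$ is a $u$-directed tree of size $k$ (every vertex has in-degree at most one, the underlying graph is a tree, and $u$ is the unique in-degree-$0$ vertex), and the rest of your argument goes through verbatim. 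Your instinct to invoke the re-rooting argument from \Cref{thm:dspntree} was sound---applied carefully it yields exactly this choice of $f$---only the specific identification of the removed arc was off. The boundary case $k=1$ needs no separate treatment.
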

\begin{proof}
Assume without loss of generality that $G$ has an arc $(u, v)$.
Since $v$ is the root of $T'$, we have $(u, v) \notin T'$.
If $u \notin V(T')$, the subgraph $T''$ obtained from $T' + (u, v)$ by removing arbitrary one of the leaf arcs is an arborescence in $\mathcal T(u)$.
Thus, by \Cref{thm:fixed-root}, there is a reconfiguration sequence between $T$ and $T''$ and then we are done in this case.
Otherwise, $T' + (u, v)$ has a directed cycle passing through $(u, v)$.
Then, the graph obtained from $T + (u, v)$ by removing the arc directed to $u$ in the cycle is an arborescence in $\mathcal T(u)$.
Again, by \Cref{thm:fixed-root}, the lemma follows.
\end{proof}

By inductively applying \Cref{thm:fixed-root} and this lemma, we have the following corollary.

\begin{corollary}\label{cor:suf:path}
    Suppose that $G$ has a directed path from $u$ to $v$ or from $v$ to $u$. Then, there is a reconfiguration sequence between $T$ and $T'$.
\end{corollary}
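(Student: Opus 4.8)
The plan is to induct on the length of the directed path witnessing reachability between $u$ and $v$. Without loss of generality assume $G$ contains a directed path $P = \langle u = w_0, w_1, \ldots, w_m = v \rangle$; the case of a path from $v$ to $u$ is symmetric because reconfiguration sequences are reversible. The base case $m = 1$ is exactly the preceding lemma: $G$ has the arc $(w_0, w_1)$, so any tree in $\mathcal T(w_0)$ is reconfigurable to any tree in $\mathcal T(w_1)$.

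For the inductive step, suppose the claim holds for all directed paths of length less than $m$, and let $T \in \mathcal T(u)$ and $T' \in \mathcal T(v)$. Since $G$ has the directed path $\langle w_1, \ldots, w_m \rangle$ of length $m - 1$ from $w_1$ to $v$, pick any tree $\hat T \in \mathcal T(w_1)$ (for instance, take any $w_1$-directed tree with $k$ arcs in $G$; such a tree exists because $\mathcal T(v)$ is nonempty and $\mathcal T(w_1) \ne \emptyset$ follows from the very lemma we are iterating, or one may simply observe $\mathcal T(w_1)$ is nonempty whenever $\mathcal T(v)$ is by the lemma applied along the single arc structure — in fact nonemptiness of all the relevant $\mathcal T(w_i)$ is itself a consequence of the lemma, since the lemma's proof constructs a tree in $\mathcal T(u)$ from one in $\mathcal T(v)$). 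By the induction hypothesis applied to the path from $w_1$ to $v$, there is a reconfiguration sequence between $\hat T$ and $T'$. By the preceding lemma applied to the arc $(w_0, w_1) = (u, w_1)$, there is a reconfiguration sequence between $T$ and $\hat T$. Concatenating the two sequences yields a reconfiguration sequence between $T$ and $T'$, and by \Cref{thm:fixed-root} the concatenation point is unproblematic since both partial sequences are genuine reconfiguration sequences of $k$-arc directed trees.

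The only subtlety I expect to need care is the nonemptiness of the intermediate collections $\mathcal T(w_i)$: the induction cannot even be stated unless each $\mathcal T(w_i)$ is nonempty, so the cleanest route is first to record that, whenever $G$ has a directed path from $u$ to $v$ and $\mathcal T(v) \ne \emptyset$, then $\mathcal T(u) \ne \emptyset$ — which is immediate from the construction in the preceding lemma (it explicitly builds a member of $\mathcal T(u)$ out of a member of $\mathcal T(v)$), applied arc by arc along the path. With that observation in hand, the induction above goes through with no further obstacles; everything else is just gluing reconfiguration sequences end to end.
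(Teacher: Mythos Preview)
Your proposal is correct and follows exactly the approach the paper intends: the paper states only that the corollary follows ``by inductively applying \Cref{thm:fixed-root} and this lemma,'' and your argument spells out precisely that induction on the length of the directed path, gluing together the reconfiguration sequences obtained from the single-arc lemma. Your extra care about nonemptiness of the intermediate $\mathcal T(w_i)$ is a detail the paper leaves implicit, and you handle it correctly by observing that the lemma's proof actually constructs a member of $\mathcal T(u)$ from one of $\mathcal T(v)$.
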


\begin{lemma}\label{lem:suf:common}
    If there is a vertex
    $w \in N^+_G(u) \cap N^+_G(v)$
    such that $G[V \setminus \{u, v\}]$ has a $w$-arborescence of size $k - 1$,
    then there is a reconfiguration sequence between $T$ and $T'$.
\end{lemma}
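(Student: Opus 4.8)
The plan is to build a directed tree $S$ of size $k$ in $G$ that contains $(u,w)$, uses the given $w$-directed tree $D$ of size $k-1$ inside $G[V\setminus\{u,v\}]$, and is rooted at $u$; symmetrically build a directed tree $S'$ of size $k$ rooted at $v$ from $(v,w)$ and $D$. Concretely, set $S := D + (u,w)$: since $D$ is a $w$-directed tree on vertices of $V\setminus\{u,v\}$, adding the arc $(u,w)$ (whose tail $u$ is a new vertex and whose head $w$ is the old root) gives a directed graph whose underlying graph is a tree, in which every vertex except $u$ has in-degree exactly $1$, so $S\in\mathcal T(u)$. Likewise $S' := D + (v,w)\in\mathcal T(v)$.

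Next I would reconfigure within each root-class using \Cref{thm:fixed-root}. Since $T$ and $S$ are both $u$-directed trees of size $k$ in $G$, there is a reconfiguration sequence from $T$ to $S$ through $\mathcal T(u)$; similarly there is one from $S'$ to $T'$ through $\mathcal T(v)$. It remains to connect $S$ to $S'$ by a single reconfiguration step. Observe that $S$ and $S'$ differ in exactly one arc: $S' = S - (u,w) + (v,w)$. Indeed $(v,w)\notin S$ because $v\notin V(S)=V(D)\cup\{u\}$, and $(u,w)\notin S'$ because $u\notin V(S')=V(D)\cup\{v\}$; removing $(u,w)$ from $S$ leaves exactly $D$, and then adding $(v,w)$ yields $S'$. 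Hence $|A(S)\setminus A(S')| = |A(S')\setminus A(S)| = 1$, so $\langle S, S'\rangle$ is a valid single-step reconfiguration. Concatenating the three pieces — the sequence from $T$ to $S$, the step $\langle S,S'\rangle$, and the sequence from $S'$ to $T'$ — gives the desired reconfiguration sequence between $T$ and $T'$, and every intermediate subgraph is a directed tree of size $k$ in $G$.

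The only thing that really needs care is checking that $D + (u,w)$ genuinely lies in $\mathcal T(u)$ and that the single swap is legitimate, i.e. that the two ``bridging'' trees $S$ and $S'$ are honest members of $\mathcal T(u)$ and $\mathcal T(v)$ respectively and differ in precisely one arc. This is where the hypotheses $w\in N^+_G(u)\cap N^+_G(v)$ (so both arcs $(u,w)$ and $(v,w)$ exist in $G$) and $V(D)\subseteq V\setminus\{u,v\}$ (so $u,v$ are fresh vertices whose addition cannot create a cycle or a second in-arc at any vertex of $D$) are used; once those are in place, everything else is a direct invocation of \Cref{thm:fixed-root}. I do not expect any genuine obstacle here — the lemma is essentially a reduction to the fixed-root case via the common out-neighbour $w$.
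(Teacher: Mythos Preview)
Your proof is correct and follows essentially the same argument as the paper: build the two bridging trees $D+(u,w)\in\mathcal T(u)$ and $D+(v,w)\in\mathcal T(v)$, connect them by a single swap, and invoke \Cref{thm:fixed-root} on each side. If anything, you spell out more carefully than the paper why $D+(u,w)$ is a genuine $u$-directed tree and why the swap is a single reconfiguration step.
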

\begin{proof}
    Let $T''$ be an arborescence in $G[V \setminus \{u, v\}]$ that has $k - 1$ arcs and root $w \in N^+_G(u) \cap N^+_G(v)$.
    Since $T'' + (u, w)$ and $T'' + (v, w)$ are arborescences that belong to $\mathcal T(u)$ and $\mathcal T(v)$, respectively,
    by \Cref{thm:fixed-root}, there are reconfiguration sequences between $T$ and $T'' + (u, w)$ and between $T'' + (v, w)$ and $T'$.
    As $T'' + (v, w)$ is reconfigurable from $T'' + (u, w)$, concatenating these sequences yields a reconfiguration sequence between $T$ and $T'$.
\end{proof}

The above corollary and lemma give sufficient conditions for finding a reconfiguration sequence between $T$ and $T'$.
The following lemma ensures that these conditions are also necessary conditions for a ``single step''.

\begin{lemma}\label{lem:nece}
    Suppose that $|A(T) \setminus A(T')| = |A(T') \setminus A(T)| = 1$.
    Then, at least one of the following conditions hold: (1) $G$ has a directed path from $u$ to $v$ or from $v$ to $u$ or
    (2) there is
    $w \in N^+_G(u) \cap N^+_G(v)$
    such that $G[V \setminus \{u, v\}]$ has a $w$-arborescence of size $k - 1$.
\end{lemma}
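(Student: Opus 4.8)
We are told $|T\setminus T'|=|T'\setminus T|=1$, so $T$ and $T'$ differ in exactly one arc: $T' = T - e + e'$ for some $e\in T\setminus T'$ and $e'\in T'\setminus T$. Here $T\in\mathcal T(u)$ is rooted at $u$ and $T'\in\mathcal T(v)$ is rooted at $v$, with $u\ne v$. The goal is to show that either $G$ has a directed path between $u$ and $v$ (in one direction or the other), or there is a common out-neighbor $w$ of $u$ and $v$ such that $G[V\setminus\{u,v\}]$ contains a $w$-directed tree of size $k-1$.

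**The plan.**

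The plan is to analyze how the root can move from $u$ to $v$ after a single arc swap. First I would observe that since $v\ne u$ is the root of $T'$, vertex $v$ has in-degree $0$ in $T'$; since $v$ has in-degree $1$ in $T$ (as $u$ is the root of $T$ and $v\ne u$), the removed arc $e$ must be the unique arc of $T$ entering $v$, say $e=(p,v)$. Symmetrically, $u$ has in-degree $0$ in $T$ but in-degree $1$ in $T'$, so the added arc $e'$ must be the unique arc of $T'$ entering $u$, say $e'=(q,u)$. Now consider the underlying structure: $T-e$ is a directed forest with exactly two weakly connected components, the component $T_v$ containing $v$ (a $v$-directed tree) and the component $T_u$ containing $u$. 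Adding $e'=(q,u)$ must reconnect these into the single $v$-directed tree $T'$, so $q$ must lie in $T_v$ and $T_u$ becomes the subtree of $T'$ hanging below $u$.

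From here I split into two cases according to whether $p$ (the tail of $e$) lies in $T_u$ or in $T_v$. \textbf{Case 1: $p\in T_v$.} Then in $T$ the path from $u$ to $v$ passes through $T_u$, then uses the arc entering the root of... wait — more carefully: in $T$ every vertex is reachable from $u$; the path from $u$ to $v$ must enter $T_v$ at its root $v$ via the arc $e=(p,v)$, but $p\in T_v$ would make $e$ internal to $T_v$, and the path from $u$ to $v$ in $T$ would have to reach $p$ first, which is below $v$ in $T_v$ — contradiction unless I reexamine. So in fact $p\in T_u$ always: the arc entering $v$ comes from the $u$-side. Then in $T$ there is a directed path from $u$ to $p$ (inside $T_u$) and $e=(p,v)$, giving a directed path from $u$ to $v$ in $T$, hence in $G$ — condition (1) holds. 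The remaining content is the case where $q\in T_v$: here $T'$ contains a directed path from $v$ to $q$ inside $T_v$; but I also need a path back toward $u$, and since $T_u$ sits below $u$ in $T'$, I get a directed path from $v$ to $u$ in $T'\subseteq G$ — again condition (1). Thus in the generic situation condition (1) already holds.

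\textbf{Main obstacle.} The case that resists this is the degenerate one where $T_u$ is trivial, i.e., $V(T_u)=\{u\}$ (so $e$ is the leaf arc $(u,v)$... no, $u$ is the root): the subtlety is when $u\notin V(T')$ is impossible since $e'$ enters $u$, so $u\in V(T')$ always, and symmetrically $v\in V(T)$. The truly delicate configuration is when neither a $u$-to-$v$ nor a $v$-to-$u$ directed path can be extracted — which, as the above analysis suggests, forces $p=u$ and $q=v$, i.e. $e=(u,v)$ and $e'=(v,u)$. Then $T-e = T-(u,v)$ has $T_u=\{u\}$ isolated, $T_v$ is a $v$-directed tree of size $k-1$ not containing $u$, and $T'=T_v+(v,u)$. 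But then we need a \emph{common} out-neighbor $w$ of $u$ and $v$ with a $w$-directed tree of size $k-1$ avoiding $u,v$; here $v$ itself is an out-neighbor of $u$, but $v\notin V\setminus\{u,v\}$. So I expect that this case must be handled by a further swap analysis or shown impossible, and \textbf{pinning down exactly which arc-swap patterns are feasible, and producing the common out-neighbor $w$ together with the size-$(k-1)$ tree in $G[V\setminus\{u,v\}]$ in the residual case, is the main obstacle}; I would resolve it by arguing that feasibility of the single swap in \emph{both} orientations (recall reconfiguration is symmetric, so $T$ is also obtained from $T'$ by one swap) constrains $p,q,u,v$ enough that either a path appears or $p=q$ is a common successor lying outside $\{u,v\}$, whose component in $T-e$ (equivalently in $T'-e'$) supplies the required $w$-directed tree of size $k-1$.
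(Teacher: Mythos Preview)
Your opening deduction contains an unjustified assumption that derails the rest of the argument. You write that ``$v$ has in-degree $1$ in $T$ (as $u$ is the root of $T$ and $v\ne u$)'', but this is only true if $v\in V(T)$; similarly, your claim that $e'$ is the arc of $T'$ entering $u$ presupposes $u\in V(T')$. You never discharge these hypotheses, and later you reason circularly: ``$u\notin V(T')$ is impossible since $e'$ enters $u$'' --- but $e'$ entering $u$ was deduced \emph{from} $u\in V(T')$.

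Once you notice this, the structure of the lemma becomes transparent. If $v\in V(T)$, then since $T$ is a $u$-directed tree there is a directed path from $u$ to $v$ inside $T\subseteq G$, and condition~(1) holds immediately; your entire Case~1/Case~2 analysis is then superfluous. Symmetrically if $u\in V(T')$. The only remaining case is $v\notin V(T)$ and $u\notin V(T')$, and this is precisely where condition~(2) arises --- a case your framework never reaches. In that case every arc of $T$ incident to $u$ lies in $T\setminus T'$ (because $u\notin V(T')$), and since $u$ is the root these are all out-arcs of $u$; as $|T\setminus T'|=1$ there is exactly one, say $e=(u,w)$. Symmetrically $e'=(v,w')$ is the unique arc of $T'$ leaving $v$. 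Then $T-e=T'-e'$ is a directed tree of size $k-1$ whose vertex set avoids both $u$ and $v$, and its root is $w=w'\in N_G^+(u)\cap N_G^+(v)$, yielding condition~(2). Your ``degenerate'' configuration $e=(u,v)$, $e'=(v,u)$ is a red herring: the arc $(u,v)$ is itself a directed path from $u$ to $v$, so condition~(1) already holds there.
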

\begin{proof}
    Suppose that $v \in V(T)$.
    Then, there is a directed path $P$ from $u$ to $v$ in $T$ and hence we are done.
    Symmetrically, the lemma follows when $u \in V(T')$.
    Thus, we assume that $v \notin V(T)$ and $u \notin V(T')$.
    This assumption implies that there is a unique arc $e$ directed from $u$ in $T$ as otherwise we have $|A(T) \setminus A(T')| \ge 2$.
    Also, there is a unique arc $e'$ directed from $v$ in $T'$.
    By the fact that $|A(T) \setminus A(T')| = |A(T') \setminus A(T)| = 1$, $T - e$ ($=T' - e'$) must be an arborescence with root $w \in N^+_G(u) \cap N^+_G(v)$ that has $k - 1$ arcs in $G[V \setminus \{u, v\}]$.
\end{proof}

To find a reconfiguration sequence between $\source{T}$ and $\sink{T}$, we construct an auxiliary graph $\mathcal G$ as follows.
We assume that $G$ is (weakly) connected. 
For each $v \in V$, $\mathcal G$ contains a vertex $v$ if $G$ has a $v$-arborescence of size $k$.
For each pair of distinct $u$ and $v$ in $V(\mathcal G)$, we add an (undirected) edge between them if (1) $G$ has a directed path from $u$ to $v$ or from $v$ to $u$; or (2) there is a vertex
$w \in N^+_G(u) \cap N^+_G(v)$
such that $G[V \setminus \{u, v\}]$ has a $w$-arborescence of size $k - 1$.
The graph $\mathcal G$ can be constructed in $O(|V||A|)$ time.
Our algorithm simply finds a path in $\mathcal G$ between the two roots of given arborescences $\source{T}$ and $\sink{T}$.
The correctness of the algorithm immediately follows from the following lemma, which also proves the first part of \Cref{thm:dtr}.

\begin{lemma}
    Let $\source{T}$ and $\sink{T}$ be arborescences in $G$ with $|A(\source{T})| = |A(\sink{T})| = k$ whose roots are $\source{r}$ and $\sink{r}$, respectively.
    Then, there is a path between $\source{r}$ and $\sink{r}$ in $\mathcal G$ if and only if there is a reconfiguration sequence between $\source{T}$ and $\sink{T}$.
\end{lemma}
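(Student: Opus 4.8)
The plan is to prove the two directions separately, with the ``if'' direction being the routine one and the ``only if'' direction requiring the real argument.

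For the ``if'' direction, suppose there is a reconfiguration sequence $\langle \source{T} = T_0, T_1, \ldots, T_\ell = \sink{T}\rangle$ of directed trees of size $k$. Let $r_i$ denote the root of $T_i$. For each consecutive pair $T_i, T_{i+1}$ with roots $r_i, r_{i+1}$, I would argue that either $r_i = r_{i+1}$, or $\{r_i, r_{i+1}\}$ is an edge of $\mathcal G$. If $r_i = r_{i+1}$ there is nothing to do. If $r_i \neq r_{i+1}$, then since $|T_i \setminus T_{i+1}| = |T_{i+1} \setminus T_i| = 1$, \Cref{lem:nece} (applied with $u = r_i$, $v = r_{i+1}$, $T = T_i$, $T' = T_{i+1}$) tells us that one of conditions (1) or (2) holds, which is exactly the defining condition for $\{r_i, r_{i+1}\}$ to be an edge of $\mathcal G$. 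Deleting repetitions from the walk $r_0, r_1, \ldots, r_\ell$ then yields a path between $\source{r}$ and $\sink{r}$ in $\mathcal G$. (One should also note that each $r_i$ is indeed a vertex of $\mathcal G$, since $T_i$ witnesses that $G$ has an $r_i$-directed tree of size $k$.)

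For the ``only if'' direction, suppose there is a path $\source{r} = w_0, w_1, \ldots, w_m = \sink{r}$ in $\mathcal G$. For each $j$ with $0 \le j < m$, the edge $\{w_j, w_{j+1}\}$ of $\mathcal G$ exists because condition (1) or (2) holds for the pair $(w_j, w_{j+1})$. Fix for each $j \in \{0, 1, \ldots, m\}$ an arbitrary directed tree $S_j \in \mathcal T(w_j)$ of size $k$ (which exists since $w_j \in V(\mathcal G)$), taking $S_0 = \source{T}$ and $S_m = \sink{T}$. Then \Cref{cor:suf:path} (for case (1)) or \Cref{lem:suf:common} (for case (2)), applied with $u = w_j$, $v = w_{j+1}$, $T = S_j$, $T' = S_{j+1}$, yields a reconfiguration sequence between $S_j$ and $S_{j+1}$ through directed trees of size $k$. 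Concatenating these $m$ sequences gives a reconfiguration sequence between $\source{T} = S_0$ and $\sink{T} = S_m$, as desired.

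I expect the main subtlety — rather than an obstacle — to be bookkeeping in the ``if'' direction: making sure the case $r_i = r_{i+1}$ is handled (so that \Cref{lem:nece} is only invoked when its hypothesis ``$u \neq v$'' is met, as that lemma implicitly assumes distinct roots through its use of the auxiliary vertices $u, v$), and confirming that every root appearing along the sequence is a genuine vertex of $\mathcal G$. Everything else reduces directly to the lemmas already established: \Cref{lem:nece} supplies necessity of the edge conditions, and \Cref{cor:suf:path} together with \Cref{lem:suf:common} supply sufficiency, so no new combinatorial insight is needed beyond assembling these pieces.
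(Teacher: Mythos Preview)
Your proposal is correct and follows essentially the same approach as the paper: the paper likewise derives the forward implication (path $\Rightarrow$ reconfiguration sequence) by applying \Cref{cor:suf:path} and \Cref{lem:suf:common} along the edges of the path in $\mathcal G$, and the converse by invoking \Cref{lem:nece} on each consecutive pair with distinct roots. Your extra bookkeeping (handling $r_i = r_{i+1}$, checking that each $r_i$ lies in $V(\mathcal G)$, and passing from a walk to a path) is sound and simply makes explicit what the paper leaves implicit.
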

\begin{proof}
    We first show the forward implication.
    Suppose that there is a path $\mathcal P$ between $\source{r}$ and $\sink{r}$ in $\mathcal G$.
    By \Cref{cor:suf:path} and \Cref{lem:suf:common} there is a reconfiguration sequence between $\source{T}$ and $\sink{T}$ that can be constructed along the path $\mathcal P$.
    
    For the converse implication, suppose that there is a reconfiguration sequence between $\source{T}$ and $\sink{T}$.
    Let $T$ and $T'$ be two arborescences that appear consecutively in the sequence.
    We claim that either $T$ and $T'$ have a common root or the roots of $T$ and $T'$ are adjacent in $\mathcal G$.
    If $T$ and $T'$ have a common root, the claim obviously holds.
    Suppose otherwise. Let $u$ and $v$ be the roots of $T$ and $T'$, respectively.
    By \Cref{lem:nece}, at least one of the conditions (1) and (2) holds, implying that $u$ and $v$ are adjacent in $\mathcal G$.
\end{proof}

We can modify our algorithm to find an actual reconfiguration sequence of length $O(|V|^2)$ if the answer is affirmative.
Let $P = (\source{r} = r_0, r_1, \dots, r_\ell = \sink{r})$ be a path between $\source{r}$ and $\sink{r}$.
We construct a reconfiguration sequence from $\source{T}$ to $\sink{T}$ by moving the roots from $\source{r}$ to $\sink{r}$ along $P$.
For each $0 \le i < \ell$, let $T_i$ and $T_{i+1}$ be an arborescence rooted at $r_i$ and $r_{i+1}$.
If the edge $\{r_i, r_{i+1}\}$ in $\mathcal{G}$ is type (1), there exists a reconfiguration sequence from $T_i$ to $T_{i+1}$ of length $O(|V|)$ by \Cref{lem:adjacent} and \Cref{cor:suf:path}.
If the edge $\{r_i, r_{i+1}\}$ is type (2), there exists a reconfiguration sequence from $T_i$ to $T_{i+1}$ of $O(|V|)$ length \Cref{lem:suf:common}.
For two arborescences with the same root, there exists a reconfiguration sequence of length $O(|V|)$ by \Cref{thm:fixed-root}.
Concatenating the sequences, we obtain a reconfiguration sequence of length $O(|V|^2)$.
Therefore, we obtain the second part of \Cref{thm:dtr}.

Let us note that for yes-instances, the upper bound $O(|V|^2)$ on the length of reconfiguration sequences is tight up to a constant factor.
\Cref{fig:LB-example} illustrates an instance that requires to transform one tree into the other with $\Omega(|V|^2)$ steps.
This can be seen as follows.
Let the gray tree be $\source{T}$ and let the dashed tree be $\sink{T}$.
We first observe that every arborescence with $k + 1$ arcs must have $a_i$ for some $i$ as its root.
This implies that $\mathcal G$ contains $k$ vertices corresponding to $a_i$ for $1 \le i \le k$.
Since there is no directed path from $a_i$ to $a_j$ with $i \neq j$,
$a_i$ and $a_j$ is adjacent in $\mathcal G$ if and only if $|i - j| \le 1$.
Now, in order to transform an $a_i$-arborescence $T$ into an $a_{i+1}$-arborescence $T'$ with a single step, $T$ must contain $(a_i, b_{i+1})$ and $(b_{i+1}, c_j)$ for all $1 \le j \le k$.
Thus, from $\source{T}$, we need to transform it into such an $a_1$-arborescence with $k + 1$ steps, and then obtain an $a_2$-arborescence $T'$ with $k + 2$ steps in total.
By inductively applying this argument to each $1 \le i \le k$, the entire reconfiguration sequence requires $k(k + 2) = \Omega(|V|^2)$ steps in total.
\begin{figure}[t]
    \centering
    \includegraphics[width=0.5\textwidth]{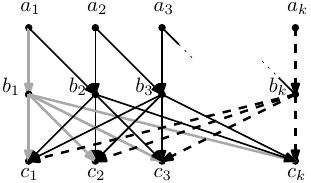}
    \caption{An example of requiring $\Omega(|V|^2)$ steps to transform the gray trees into the dashed trees.}
    \label{fig:LB-example}
\end{figure}

\section{Intractable Cases}\label{sec:hard}

In this section, we show negative results for some graph properties $\Pi$.
We will prove that when $\Pi$ is the property of being a directed path or a directed acyclic graph, the problem is PSPACE-complete, and when $\Pi$ is the property of being a strongly connected graph, the problem is NP-hard.

\subsection{Directed paths}\label{sec:path}
In this subsection, we show that {\sc Directed Path Reconfiguration} and {\sc Directed Path Sliding}, which are defined below, are both PSPACE-complete.
Thanks to the PSPACE-completeness of the undirected counterpart of {\sc Directed Path Sliding}~\cite{DEHJLUU19}, a very simple reduction shows that {\sc Directed Path Sliding} is PSPACE-complete as well (\Cref{thm:pspace-comp-path}).
In the following, we show that {\sc Directed Path Reconfiguration} is equivalent to {\sc Directed Path Sliding} in the complexity perspective.

{\sc Directed Path Reconfiguration} is a variant of {\sc Arborescence Reconfiguration}, where the two input trees $\source{T}$, $\sink{T}$ and intermediate trees are all directed paths in $G$.
Here, we use $\langle P_0, P_1, \ldots, P_\ell \rangle$ with $P_0 = \source{P}$ and $P_\ell = \sink{P}$ to denote a reconfiguration sequence between two directed paths $\source{P}$ and $\sink{P}$.
{\sc Directed Path Sliding} consists of the same instance of {\sc Directed Path Reconfiguration} and we are allowed the following adjacency relation in a valid reconfiguration sequence: for every pair of consecutive directed paths $P = (v_1, v_2, \ldots, v_k)$ and $P' = (v'_1, v'_2, \ldots, v'_k)$, either $v_i = v'_{i + 1}$ holds for all $1 \le i < k$ or $v_i = v'_{i - 1}$ holds for all $1 < i \le k$.
Since $P'$ is obtained by ``sliding'' in a forward or backward direction, we call the problem {\sc Directed Path Sliding}.
In this subsection, we show that {\sc Directed Path Reconfiguration} and {\sc Directed Path Sliding} are both PSPACE-complete.

To this end, we first show that both problems are equivalent with respect to polynomial-time many-one reductions.
Let $G$ be a directed graph and let $P = (v_1, v_2, \ldots, v_k)$ be a directed path in $G$ with arc $e_i = (v_i, v_{i+1})$ for $1 \le i < k$.
We denote by $t(P)$ the tail $v_1$ of $P$ and by $h(P)$ the head $v_k$ of $P$.
Observe that for a directed path $P'$ in $G$ with $|A(P) \setminus A(P')| = |A(P') \setminus A(P)| = 1$, at least one of the following conditions hold:
\begin{itemize}
    \item {\bf sliding}: $P' = (v_2, v_3, \ldots v_k, v)$ or $P' = (v, v_1, v_2, \ldots, v_{k - 1})$ for some $v \in V \setminus V(P)$;
    \item {\bf turning}: $P' = (v_1, v_2, \ldots, v_{k-1}, v)$ or $P' = (v, v_2, v_3, \ldots, v_k)$ for some $v \in V \setminus V(P)$;
    \item {\bf shifting}: $P' = (v_i, v_{i+1}, \ldots, v_k, v_1, \ldots, v_{i-1})$ for some $1 < i \le k$. This can be done when $P + (v_k, v_1)$ forms a directed cycle.
\end{itemize}
See Fig.~\ref{fig:op} for an illustration.

\begin{figure}[t]
    \centering
    \includegraphics{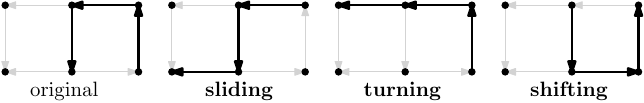}
    \caption{An illustration of the three operations in {\sc Directed Path Reconfiguration}.}
    \label{fig:op}
\end{figure}

We can regard these conditions as operations to obtain $P'$ from $P$. 
Since {\bf shifting} can be simulated by $i - 1$ {\bf sliding} operations along the directed cycle $P + (v_k, v_1)$, the essential difference between {\sc Directed Path Reconfiguration} and {\sc Directed Path Sliding} is the {\bf turning} operation in order to solve these problems.
Now, we perform polynomial-time reductions between these problems in both directions.

Let $(G = (V, A), \source{P}, \sink{P})$ be an instance of {\sc Directed Path Reconfiguration}.
For each vertex $v$ in $G$, we add two vertices $v^{\rm in}, v^{\rm out}$ and two arcs $(v^{\rm in}, v), (v, v^{\rm out})$.
These two vertices are called \emph{pendant vertices}.
We let $G'$ be the graph obtained in this way.
Then, we show the following lemma.

\begin{lemma}\label{lem:reconf-to-slide}
    $(G, \source{P}, \sink{P})$ is a yes-instance of {\sc Directed Path Reconfiguration} if and only if $(G', \source{P}, \sink{P})$ is a yes-instance of {\sc Directed Path Sliding}.
\end{lemma}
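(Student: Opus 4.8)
The plan is to show the two directions of the equivalence by translating reconfiguration sequences back and forth, using the pendant vertices $v^{\rm in}, v^{\rm out}$ as "scratch space" that lets a single **turning** operation in $G$ be simulated by two **sliding** operations in $G'$.

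First I would handle the easy direction: suppose $(G',\source P,\sink P)$ is a yes-instance of {\sc Directed Path Sliding}. Since sliding is a special case of the single-arc-swap adjacency, the same sequence is a reconfiguration sequence in the sense of {\sc Directed Path Reconfiguration} in $G'$; but $G'$ contains $G$ as a subgraph and the extra arcs $(v^{\rm in},v),(v,v^{\rm out})$ only touch degree-one pendant vertices. The point is that $\source P,\sink P$ live entirely inside $G$, so I need to argue that any sliding sequence in $G'$ can be "pushed" into $G$ — i.e. whenever the current path uses a pendant vertex, I can reroute around it. Concretely, a path in $G'$ can contain at most one pendant vertex at each end (a pendant vertex has in- or out-degree zero in $G'$, so it can only be an endpoint), and sliding into/out of a pendant vertex is reversible; by a short surgery argument I can assume the sequence never leaves $G$, after possibly deleting redundant steps. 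Hence $(G,\source P,\sink P)$ is a yes-instance of {\sc Directed Path Reconfiguration}.

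For the forward direction I would take a reconfiguration sequence $\langle \source P = P_0,\dots,P_\ell = \sink P\rangle$ in $G$ and replace each step by a short sliding sequence in $G'$. By the trichotomy recorded just before the lemma, each step is a **sliding**, a **turning**, or a **shifting**. Sliding steps are copied verbatim; shifting steps are expanded into $i-1$ sliding steps along the directed cycle, as already noted in the text. The crux is a **turning** step, say $P_i = (v_1,\dots,v_{k-1},v_k)$ and $P_{i+1} = (v_1,\dots,v_{k-1},v)$ with $v\notin V(P_i)$: in $G'$ I first slide $P_i$ forward using the arc $(v_k, v_k^{\rm out})$ to obtain $(v_2,\dots,v_k,v_k^{\rm out})$, then — wait, that changed the tail. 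The correct gadget uses the pendant of the endpoint being turned: I slide $P_i$ forward to $(v_2,v_3,\dots,v_k,v_k^{\rm out})$, then slide it forward again. This does not reach $P_{i+1}$ directly; instead the clean construction is to note that turning the head from $v_k$ to $v$ is the same as: append $v_k^{\rm out}$ is useless, so use $v$'s availability — slide so that $v_k$ is dropped from the head and $v$ is absorbed at the tail, then **shift**/slide around. I expect the actual proof to route the path through the pendant vertex $v_{k-1}^{\rm out}$ or through $v^{\rm in}$: from $(v_1,\dots,v_{k-1},v_k)$ slide backward into $v_1^{\rm in}$ is not helpful either. The workable route is: $(v_1,\dots,v_{k-1},v_k)\to(v_1,\dots,v_{k-1},v_{k-1}^{\rm out})$ is not a slide. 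So the honest mechanism is that a turn at the head is realized by one slide forward (using an out-arc), which changes both endpoints, followed by one slide backward (using an in-arc at the new tail), and the pendant vertices guarantee such arcs always exist; after the two slides the head has changed to the desired $v$ and the tail has returned to $v_1$, provided $v\in N^+_G(v_{k-1})$, which is exactly the condition making the turn legal in $G$. I would verify this bookkeeping carefully and check the symmetric case (turning at the tail).

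The main obstacle, and the place I would spend the most care, is exactly this simulation of **turning** by two **slidings**: making sure that after the two-step detour through a pendant vertex the path has returned entirely into $V(G)$ with the correct orientation and the correct set of $k$ arcs, that all intermediate paths are genuinely directed paths in $G'$ (in particular that the pendant vertices, having degree one, really can be slid into and out of), and that the pendant vertices of different original vertices never interfere. The reverse direction's surgery — showing a sliding sequence in $G'$ can be normalized to stay inside $G$ — is the second thing to get right, but it is more routine since pendant vertices can only appear at path endpoints. Once both directions are in place, combining them proves the lemma, and, as the surrounding text indicates, the two reductions together establish the polynomial-time equivalence of {\sc Directed Path Reconfiguration} and {\sc Directed Path Sliding}.
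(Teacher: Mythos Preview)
Your overall plan matches the paper's: translate each step of a reconfiguration sequence in $G$ into a short sliding sequence in $G'$, using the pendants as scratch space, and conversely delete the pendant-visiting paths from a sliding sequence in $G'$. The backward direction (sliding in $G'$ implies reconfiguration in $G$) is fine in spirit.

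The gap is in the forward direction, precisely the step you yourself flag as the crux: your simulation of a \textbf{turning} at the head is wrong, and you never arrive at a working intermediate path. Sliding \emph{forward} first (to $(v_2,\dots,v_k,v_k^{\rm out})$) and then backward can only restore the head to $v_k$, never to the desired new head $v$; the claim ``after the two slides the head has changed to the desired $v$ and the tail has returned to $v_1$'' is simply false for that route. The correct gadget, which the paper uses, is the opposite order: to turn the head from $v_k$ to $v$ (so $t(P_i)=t(P_{i+1})=v_1$), first slide \emph{backward} using the pendant at the tail to get $P'=(v_1^{\rm in},v_1,\dots,v_{k-1})$, and then slide \emph{forward} along $(v_{k-1},v)$ to reach $P_{i+1}=(v_1,\dots,v_{k-1},v)$. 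Symmetrically, a tail-turn is handled by forward-then-backward via $P'=(v_2,\dots,v_k,v_k^{\rm out})$. The pendant arcs $(v_1^{\rm in},v_1)$ and $(v_k,v_k^{\rm out})$ are exactly what guarantee these slides are always available. Once you write down this $P'$ explicitly (and keep genuine sliding steps as they are), the forward direction goes through with no further difficulty.
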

\begin{proof}
    Let $\langle P_0, P_1, \ldots, P_\ell \rangle$ be a reconfiguration sequence between $\source{P} = P_0$ and $\sink{P} = P_\ell$ of {\sc Directed Path Reconfiguration}.
    By the above argument, we can assume that $P_{i + 1}$ is obtained from $P_i$ by applying either {\bf sliding} or {\bf turning}.
    Let $P_i = (v_1, v_2, \ldots, v_k)$.
    We replace the subsequence $\langle P_i, P_{i + 1} \rangle$ with $\langle P_i, P', P_{i + 1} \rangle$, where $P' = (v_1^{\rm in}, v_1, v_2, \ldots, v_{k-1})$ if $t(P_i) = t(P_{i + 1})$ and $P' = (v_2, v_3, \ldots, v_{k}, v_k^{\rm out})$ otherwise.
    Clearly, $P'$ and $P_{i + 1}$ are obtained from $P_i$ and $P'$ by applying {\bf sliding} operations, respectively.
    By replacing each subsequence for $0 \le i < \ell$, we have a reconfiguration sequence of {\sc Directed Path Sliding} in $G'$.
    
    Conversely, let $\langle P_0, P_1, \ldots, P_\ell \rangle$ be a reconfiguration sequence $\source{P} = P_0$ and $\sink{P} = P_\ell$ of {\sc Directed Path Sliding}.
    Similarly to the other direction, we construct a reconfiguration sequence of {\sc Directed Path Reconfiguration}.
    Assume that $\source{P} \neq \sink{P}$ as otherwise we are done.
    Observe that each path $P_i = (v_1, v_2, \ldots, v_k)$ contains at most one pendant vertex.
    This follows from the fact that if $P_i$ contains both $v^{\rm in}$ and $w^{\rm out}$ for some $v, w \in V$, then $P_i$ cannot move to a distinct position by {\bf sliding} operations.
    Now, suppose $P_i$ is a directed path in $G$ with $P_i \neq \sink{P}$, that is, it has no pendant vertices.
    As $\sink{P}$ has no pendant vertices, we can find the smallest index $j > i$ such that $P_j$ has no pendant vertices.
    Since $P_j$ can be obtained from $P_i$ by {\bf sliding} or {\bf turning}, we can construct a reconfiguration sequence of {\sc Directed Path Reconfiguration} by omitting paths having pendant vertices.
\end{proof}

For the converse direction, we let $(G, \source{P}, \sink{P})$ be an instance of {\sc Directed Path Sliding}.
Let $G'$ be the directed graph obtained from $G$ by subdividing each arc $e = (u,w)$ with a new vertex $v_e$, that is, we replace $e$ with $v_e$ and add two arcs $(u,v_e)$ and $(v_e,w)$.
Let $\source{Q}$ and $\sink{Q}$ be defined accordingly from $\source{P}$ and $\sink{P}$, respectively.
In $G'$, we say that a path $P'$ is a \emph{standard path} if $h(P')$ and $t(P')$ belong to $V$ and it is a \emph{nonstandard path} otherwise.

\begin{lemma}\label{lem:slide-to-reconf}
    $(G, \source{P}, \sink{P})$ is a yes-instance of {\sc Directed Path Sliding} if and only if $(G', \source{Q}, \sink{Q})$ is a yes-instance of {\sc Directed Path Reconfiguration}.
\end{lemma}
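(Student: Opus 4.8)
The plan is to characterise exactly which directed paths of $G'$ can appear in a reconfiguration sequence — namely those with $2(k-1)$ arcs, where $k$ is the common number of vertices of $\source{P}$ and $\sink{P}$ — and then to translate single steps in both directions. First I would record the structure of $G'$: every arc of $G'$ joins a vertex of $V$ to a subdivision vertex $v_e$; every $v_e$ has in-degree and out-degree exactly $1$, its unique in- and out-neighbour being the tail and head of $e$; hence every directed path of $G'$ strictly alternates between $V$-vertices and subdivision vertices. Consequently a directed path of $G'$ with $2(k-1)$ arcs is either \emph{standard} — both endpoints in $V$, so it has $k$ vertices of $V$ and $k-1$ subdivision vertices and is precisely the subdivision of a unique $k$-vertex directed path $\pi(Q)$ of $G$ — or \emph{nonstandard} — both endpoints subdivision vertices, with $k-1$ vertices of $V$ in the middle and one ``dangling'' subdivision vertex at each end. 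Note $\source{Q},\sink{Q}$ are standard with $\pi(\source{Q})=\source{P}$ and $\pi(\sink{Q})=\sink{P}$.

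\emph{From sliding in $G$ to reconfiguration in $G'$.} Given a sliding sequence $\langle\source{P}=P_0,\dots,P_m=\sink{P}\rangle$, each step deletes an end-arc of $P_i$ and attaches a new arc at the opposite end (this uniformly covers ordinary slides and one-step shifts, a general shift being a run of one-step shifts). Consider a forward step that deletes the first arc of $P_i=(o_1,\dots,o_k)$ and attaches the arc $f$ leaving $o_k$, producing $P_{i+1}=(o_2,\dots,o_k,w)$ with $f=(o_k,w)$. I would replace the pair $\langle Q_i,Q_{i+1}\rangle$ of corresponding subdivisions by $\langle Q_i,R,Q_{i+1}\rangle$, where $R$ is obtained from $Q_i$ by deleting its first arc and adding $(o_k,v_f)$, and $Q_{i+1}$ is recovered from $R$ by deleting the arc into $o_2$ and adding $(v_f,w)$. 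Both substeps are legitimate {\sc Directed Path Reconfiguration} moves: the required arcs lie in $G'$, $v_f$ is fresh since $f$ is not an arc of $P_i$, and the resulting vertex sequences are repetition-free because $P_i$ and $P_{i+1}$ are paths. Backward steps are symmetric, so concatenating these length-two gadgets over all $i$ yields a reconfiguration sequence from $\source{Q}$ to $\sink{Q}$.

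\emph{From reconfiguration in $G'$ to sliding in $G$.} Given a reconfiguration sequence $\langle\source{Q}=Q_0,\dots,Q_N=\sink{Q}\rangle$, I would pass to the subsequence of standard paths occurring in it, $\source{Q}=Q_{j_0},Q_{j_1},\dots,Q_{j_p}=\sink{Q}$, and try to prove that consecutive terms $\pi(Q_{j_t})$ and $\pi(Q_{j_{t+1}})$ are equal or adjacent under the sliding relation of $G$; dropping repetitions then produces the required sliding sequence. An easy first step is that no single reconfiguration move carries a standard path to a standard path: a turning move at an endpoint of a standard path would re-attach that endpoint to the next vertex of the path, but that next vertex is a subdivision vertex $v_e$ whose only in-neighbour (resp.\ out-neighbour) in $G'$ is the tail (resp.\ head) of $e$, i.e.\ the old endpoint itself, so no alternative re-attachment exists; and a shifting move is impossible since $G'$ has no arc between two $V$-vertices. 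Similarly, the only move out of a nonstandard path back onto a standard path is a sliding move. Hence, between $Q_{j_t}$ and $Q_{j_{t+1}}$, the sequence leaves the standard paths by a sliding move, wanders among nonstandard paths, and returns by a sliding move.

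The hard part, and where I expect to spend most of the effort, is controlling the stretch that stays among nonstandard paths. I would encode a nonstandard path $R$ by the triple $(\mu(R),a(R),b(R))$ consisting of its middle $(k-1)$-vertex directed path of $G$ and the two arcs of $G$ represented by its front and back dangling subdivision vertices, and then verify that each admissible move among nonstandard paths acts on this triple tamely: a sliding move advances $\mu$ by one vertex along $a(R)$ or $b(R)$ and refreshes the consumed dangling arc by another incident arc, while a turning move at a dangling end merely swaps $a(R)$ or $b(R)$ for another arc incident to the corresponding extreme vertex of $\mu(R)$. The crux is then to show that, whichever nonstandard path the walk sits at immediately before sliding back out, the standard path it lands on corresponds to a $G$-path obtained from $\pi(Q_{j_t})$ by appending or prepending a single vertex — i.e.\ by one sliding move of $G$ — so that the run of these moves across all $t$ is the desired sliding sequence. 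Establishing this, and in particular ruling out that a chain of turning moves among nonstandard paths lets the walk re-emerge at a standard path unreachable by sliding in $G$, is where essentially all of the argument lies, and is precisely the point at which one must exploit that every subdivision vertex of $G'$ has in-degree and out-degree one.
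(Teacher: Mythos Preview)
Your plan mirrors the paper's: classify the length-$2(k-1)$ paths of $G'$ as standard or nonstandard, observe that from a standard path only sliding is available, slice a $G'$-reconfiguration sequence at its standard paths, and argue that consecutive standard paths project to $G$-paths differing by one slide. The paper simply asserts this last step (``$Q_j$ is obtained from $Q_i$ by two sliding operations''); you rightly single it out as ``where essentially all of the argument lies.'' Unfortunately that step is false, and in fact the lemma itself fails for this construction. Take $G$ on $\{a,b,c,x\}$ with arcs $(a,b),(x,b),(b,c),(c,b)$ and $\source{P}=(a,b,c)$, $\sink{P}=(x,b,c)$. Both paths are frozen under sliding in $G$: the only out-arc of $c$ returns to $b$, and $a,x$ have no in-arcs. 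So the Sliding instance is a no-instance. In $G'$, writing $v_e$ for the subdivision vertex of arc $e$, the sequence
\[
(a,v_{(a,b)},b,v_{(b,c)},c)\;\to\;(v_{(a,b)},b,v_{(b,c)},c,v_{(c,b)})\;\to\;(v_{(x,b)},b,v_{(b,c)},c,v_{(c,b)})\;\to\;(x,v_{(x,b)},b,v_{(b,c)},c)
\]
(forward slide, turn at the tail, backward slide) is a valid Reconfiguration sequence from $\source{Q}$ to $\sink{Q}$, so the Reconfiguration instance is a yes-instance.

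The breakdown is exactly what you feared. Among nonstandard paths the only admissible moves are turns (a slide from a nonstandard path always lands on a standard one, so your aside about sliding ``among nonstandard paths'' is off); turns leave the middle $\mu(R)$ fixed but may replace the front dangling arc $a(R)$ by \emph{any} arc of $G$ into the first vertex of $\mu(R)$. Hence the backward slide that exits the nonstandard stretch can prepend a different first vertex than the forward slide had dropped, producing two consecutive standard paths whose $G$-projections are related by a \emph{turn}, not a slide. The in/out-degree-one property of subdivision vertices that you plan to exploit pins down how the walk enters and leaves the nonstandard region, but it does nothing to stop the dangling arcs from being swapped in between; no argument along these lines will close the gap for the subdivision construction as stated.
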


\begin{proof}
    It is easy to transform any reconfiguration sequence of $(G, \source{P}, \sink{P})$ for {\sc Directed Path Sliding} to that of $(G', \source{Q}, \sink{Q})$ for {\sc Directed Path Reconfiguration}.
    Conversely, let $\langle Q_0, Q_1, \ldots, Q_\ell \rangle$ be a reconfiguration sequence of {\sc Directed Path Reconfiguration} between $\source{Q} = Q_0$ and $\sink{Q} = Q_\ell$ in $G'$.
    Observe that {\bf turning} is allowed only for nonstandard paths.
    This means that for any two standard paths $Q_i$ and $Q_j$ in a reconfiguration sequence such that $Q_k$ is nonstandard for $i < k < j$, $Q_j$ is obtained from $Q_i$ by two {\bf sliding} operations.
    Thus, by replacing each subsequence $\langle Q_i, Q_{i + 1}, \ldots,  Q_j \rangle$ in this way, we obtain that of $(G', \source{Q}, \sink{Q})$ for {\sc Directed Path Sliding}, which also gives a reconfiguration sequence of $(G, \source{P}, \sink{P})$ for {\sc Directed Path Sliding} as well.
\end{proof}

Now, we show the PSPACE-completeness of {\sc Directed Path Sliding}.
\begin{theorem}\label{thm:pspace-comp-path}
    {\sc Directed Path Sliding} is PSPACE-complete.
\end{theorem}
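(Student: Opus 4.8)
The plan is to prove membership in PSPACE and then PSPACE-hardness, with all the work concentrated in the hardness reduction. Membership is routine: a reconfiguration sequence, if one exists, need not be stored in full; we can nondeterministically guess the sequence one path at a time, keeping only the current path (which uses $O(|V|\log|V|)$ space) and checking at each step that the sliding adjacency relation holds, until we reach $\sink P$. By Savitch's theorem this puts the problem in PSPACE. (Strictly one should also bound the length of a shortest sequence by $2^{\mathrm{poly}}$, which is immediate since there are at most $|V|!$ directed paths of a fixed length.)

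For hardness I would reduce from a known PSPACE-complete reconfiguration problem on undirected graphs, the natural candidate being the sliding/token-sliding variant of shortest-path reconfiguration, or directly from {\sc Sliding Tokens} / {\sc Nondeterministic Constraint Logic}. The cleanest route, given the machinery already set up in the excerpt, is to reduce from a sliding-path problem in \emph{undirected} graphs: given an undirected graph $H$ and two paths $\source P,\sink P$ of the same length, decide whether one can be transformed into the other by repeatedly sliding one endpoint along an edge. This undirected problem is known to be PSPACE-complete (Demaine et al., Gupta et al., as cited in the introduction). The key step is a gadget construction that turns an undirected instance into a directed one so that the only moves available are exactly the two ``sliding'' moves of {\sc Directed Path Sliding}. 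The natural idea is to replace each undirected edge $\{u,v\}$ of $H$ by a small directed gadget — for instance a pair of oppositely oriented internally-disjoint directed paths, one from $u$ to $v$ and one from $v$ to $u$, each of the same even length $2L$ — so that traversing an edge in either direction in $H$ corresponds to sliding the directed path across the corresponding directed sub-path in $G$. One then scales all original path-lengths by $2L$ so that a ``real'' configuration sits with both endpoints on original vertices of $H$, and intermediate positions of the directed path (partway through a gadget) are forced to continue in the same direction because the gadget paths are directed and internally disjoint, so no {\bf turning} or {\bf shifting} is ever possible except the intended one.

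The main obstacle will be ruling out ``illegitimate'' moves inside the gadgets: I must argue that from any configuration of $G$, the sliding operation can only either advance the directed path further along the gadget path it is currently on, or retreat along it, and in particular cannot jump to a different gadget or create a directed cycle enabling a {\bf shifting}. This is handled by making every gadget path long enough (taking $L$ larger than the length of $\source P$, so the sliding path is strictly shorter than twice a gadget path and thus can never wrap around or straddle three consecutive original vertices) and by ensuring the directed gadgets are pairwise internally vertex-disjoint and acyclic on the relevant scale; then a straightforward case analysis on which of the three operations {\bf sliding}/{\bf turning}/{\bf shifting} could apply shows only forward/backward sliding within the current gadget is available. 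Once this invariant is established, the correspondence between reconfiguration sequences in $H$ and in $G$ is a bijection up to the obvious ``unfold each edge-crossing into $2L$ slides'' bookkeeping, and the theorem follows. I would then remark that combining this with \Cref{lem:slide-to-reconf} (and \Cref{lem:reconf-to-slide}) immediately yields PSPACE-completeness of {\sc Directed Path Reconfiguration} as well.
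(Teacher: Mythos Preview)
Your membership argument is fine and matches the paper's. The hardness argument, however, is built around a misreading of the problem. In {\sc Directed Path Sliding} the only legal moves are, by definition, the two slides; {\bf turning} and {\bf shifting} are the additional single-arc swaps that arise in {\sc Directed Path Reconfiguration}, not here. So there is nothing to ``rule out'': you do not need gadgets that suppress turning or prevent directed cycles from enabling a shift, because neither move is available to the adversary in the first place.

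Once this is clear, the reduction collapses to the paper's one-liner: take the undirected sliding-path instance $(H,\source{P},\sink{P})$, replace every edge $\{u,v\}$ by the two arcs $(u,v)$ and $(v,u)$, and orient $\source{P}$ and $\sink{P}$. Directed paths in the bidirected graph are exactly oriented undirected paths in $H$, and a directed slide is exactly an undirected slide; the directed sliding reconfiguration graph is a (trivial) double cover of the undirected one, so the undirected instance is a yes-instance iff one of the two orientations of $\sink{P}$ is reachable from the chosen orientation of $\source{P}$. Your long-path gadget with length scaling could presumably be pushed through, but it is solving a non-problem and would add a page of case analysis where the paper needs a sentence.
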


\begin{proof}
    By a standard argument in reconfiguration problems, the problem belongs to PSPACE: By non-deterministically guessing the ``next solution'' in a reconfiguration sequence, the problem can be solved in non-deterministic polynomial space, while by Savitch's theorem~\cite{Sav70}, we can solve the problem in deterministic polynomial space as well.
    
    It is easy to observe that the undirected version of {\sc Directed Path Sliding} can be reduced to {\sc Directed Path Sliding} by simply replacing each (undirected) edge of an input graph with two arcs with opposite directions.
    As the undirected version is known to be PSPACE-complete~\cite{DEHJLUU19}, the directed version is also PSPACE-complete.
\end{proof}

By~\Cref{lem:slide-to-reconf}, we immediately have the following corollary.
\begin{corollary}
    {\sc Directed Path Reconfiguration} is PSPACE-complete.
\end{corollary}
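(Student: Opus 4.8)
The plan is to prove the two halves of PSPACE-completeness separately, exactly as for \Cref{thm:pspace-comp-path}. For membership in PSPACE, I would reuse the standard reconfiguration argument verbatim: a reconfiguration sequence can be explored by nondeterministically guessing, at each step, the ``next'' directed path, which requires only polynomial space to store a single path and verify that it is indeed a directed path differing from its predecessor in exactly one arc; Savitch's theorem~\cite{Sav70} then converts this into a deterministic polynomial-space algorithm. Restricting the feasible subgraphs to directed paths rather than arbitrary directed trees changes nothing here, since recognizing a directed path and checking the single-arc exchange condition are both polynomial-time tasks.

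For PSPACE-hardness, I would invoke \Cref{lem:slide-to-reconf} as a polynomial-time many-one reduction from {\sc Directed Path Sliding}. Given an instance $(G, \source{P}, \sink{P})$ of {\sc Directed Path Sliding}, the reduction outputs $(G', \source{Q}, \sink{Q})$, where $G'$ is obtained from $G$ by subdividing every arc once and $\source{Q}, \sink{Q}$ are the paths obtained from $\source{P}, \sink{P}$ under this subdivision. This is clearly computable in polynomial time, since $G'$ has $|V(G)| + |A(G)|$ vertices and $2|A(G)|$ arcs, and \Cref{lem:slide-to-reconf} guarantees that $(G, \source{P}, \sink{P})$ is a yes-instance of {\sc Directed Path Sliding} if and only if $(G', \source{Q}, \sink{Q})$ is a yes-instance of {\sc Directed Path Reconfiguration}. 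Since {\sc Directed Path Sliding} is PSPACE-complete by \Cref{thm:pspace-comp-path}, this reduction establishes PSPACE-hardness of {\sc Directed Path Reconfiguration}, and combined with membership we obtain PSPACE-completeness.

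The only conceptual obstacle in this chain has already been handled inside the proof of \Cref{lem:slide-to-reconf}: one must argue that in $G'$ a \textbf{turning} operation is available only at the degree-constrained subdivision vertices, so that any two consecutive \emph{standard} paths in a reconfiguration sequence are linked by two \textbf{sliding} steps (and conversely each sliding step in $G$ lifts to two reconfiguration steps in $G'$), which is precisely what lets a {\sc Directed Path Reconfiguration} sequence be converted back into a {\sc Directed Path Sliding} sequence. Granting that lemma and \Cref{thm:pspace-comp-path}, there is genuinely nothing further to do, so the corollary is immediate.
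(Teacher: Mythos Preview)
Your proposal is correct and follows exactly the paper's approach: the paper derives the corollary immediately from \Cref{lem:slide-to-reconf} together with \Cref{thm:pspace-comp-path}, and your write-up simply spells out the membership-in-PSPACE half and the reduction details that the paper leaves implicit. There is nothing to add or change.
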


\subsection{Directed acyclic graphs}\label{sec:dag}
Suppose that subgraphs in a reconfiguration sequence are relaxed to be acyclic.
Observe that the problem is equivalent to reconfiguring directed feedback arc sets in directed graphs.
More specifically, given two directed acyclic subgraphs $\source{H}$ and $\sink{H}$ in a directed graph $G = (V, A)$, the problem asks to determine whether there is a reconfiguration sequence of directed acyclic subgraphs $\langle \source{H} = H_0, H_1, \ldots, H_\ell = \sink{H} \rangle$ such that $|A(H_i) \setminus A(H_{i + 1})| = |A(H_{i + 1}) \setminus A(H_i)| = 1$ for all $0 \le i < \ell$.
Seeing this problem from the complement, the problem is equivalent to finding a reconfiguration sequence $\langle A_1, A_2, \ldots, A_\ell\rangle$ of subsets of $A$ such that $H_i = G - A_i$ is acyclic for all $0 \le i \le \ell$.
Since each $A_i$ is a feedback arc set of $G$, we call this problem {\sc Directed Feedback Arc Set Reconfiguration}.
There is another variant of this problem, called {\sc Directed Feedback Vertex Set Reconfiguration}, in which we are asked to determine two given subsets $\source{V}$ and $\sink{V}$ of $V$, there is a sequence of vertex subsets $\langle \source{V} = V_0, V_1, \ldots, V_\ell = \sink{V} \rangle$ of $V$ such that $G[V \setminus V_i]$ is acyclic and $|V_i \setminus V_{i + 1}| = |V_{i + 1} \setminus V_i| = 1$ for all $0 \le i < \ell$.

\begin{theorem}
    {\sc Directed Feedback Arc Set Reconfiguration} and {\sc Directed Feedback Vertex Set Reconfiguration} are PSPACE-complete.
\end{theorem}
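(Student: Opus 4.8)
The plan for the upper bound is the standard argument for reconfiguration problems, exactly as in \Cref{thm:pspace-comp-path}: each solution is a subset of $A$ (resp.\ $V$) of the fixed cardinality $|\source{H}|=|\sink{H}|$ (resp.\ $|V^s|=|V^g|$), so it fits in polynomial space, and checking that such a subset induces an acyclic digraph is polynomial; hence a reconfiguration sequence can be guessed step by step in nondeterministic polynomial space, and Savitch's theorem~\cite{Sav70} converts this into a deterministic polynomial-space algorithm. For the hardness part I would reduce from a PSPACE-complete reconfiguration problem on undirected graphs whose move rule already matches ours, namely \textsc{Independent Set Reconfiguration} (equivalently \textsc{Vertex Cover Reconfiguration}), which is PSPACE-complete in both the token-jumping and the token-addition/removal (TAR) models~\cite{IDHPSUU11}.

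\textbf{\textsc{Directed Feedback Vertex Set Reconfiguration}.}
I would treat this problem first, since the reduction is direct. From an undirected graph $H$ with two vertex covers $S^s,S^g$ of common size $k$ (under token jumping), build the bidirected digraph $\vec H$ obtained by replacing each edge $\{u,v\}$ of $H$ by the two arcs $(u,v)$ and $(v,u)$. A set $X\subseteq V(H)$ is a feedback vertex set of $\vec H$ if and only if $\vec H-X$ has no arc (each arc of $\vec H$ lies on a $2$-cycle), i.e.\ if and only if $X$ is a vertex cover of $H$. Since the symmetric-difference-one rule is literally the same on both sides, $(\vec H,S^s,S^g)$ is a yes-instance of \textsc{Directed Feedback Vertex Set Reconfiguration} exactly when $(H,S^s,S^g)$ is a yes-instance of token-jumping \textsc{Vertex Cover Reconfiguration}, which yields PSPACE-hardness.

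\textbf{\textsc{Directed Feedback Arc Set Reconfiguration}.}
Here I would use a vertex-splitting gadget: split each vertex $v$ of the source digraph into $v^-$ and $v^+$ with a private arc $a_v=(v^-,v^+)$, reroute every original arc $(u,v)$ as an arc from $u^+$ to $v^-$, and, crucially, replace that rerouted arc by $N$ parallel copies with $N$ strictly larger than the fixed solution size $k$. Every directed cycle of the gadget uses some private arc $a_v$, and no feedback arc set of size $k$ can afford to delete an entire bundle of $N$ copies, so any bundle arc it contains is redundant. Thus, up to redundant bundle arcs, a feedback arc set of size $k$ is exactly $\{a_v : v\in X\}$ for a feedback vertex set $X$ of the source digraph, and the map $\{a_v : v\in X\}\mapsto X$ matches the moves of the two problems on ``clean'' solutions.

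\textbf{Main obstacle.}
I expect the real difficulty to be the correspondence of reconfiguration \emph{sequences} for the feedback-arc-set problem. The forward direction (turning a sequence on the undirected side into one on the gadget) is routine. In the converse, however, an intermediate feedback arc set may park one or more tokens on bundle arcs, so its feedback-vertex-set part temporarily shrinks below $k$; a short case analysis of the permitted single moves shows that this parking behavior exactly realizes the TAR dynamics rather than token jumping. Consequently I would set up the reduction from the TAR model of \textsc{Independent Set Reconfiguration} (applied to the bidirected graph $\vec H$), tuning the fixed solution size $k$ and the bundle multiplicity $N$ so that the number of parkable tokens matches the complementary TAR threshold; verifying that this tuning is internally consistent and that parked tokens never enable an ``illegal'' shortcut (so that a gadget reconfiguration sequence always projects back to a legal TAR sequence) is where the bookkeeping work lies.
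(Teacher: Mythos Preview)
Your plan coincides with the paper's proof in all essentials: the same membership argument, the same bidirection trick for \textsc{Directed Feedback Vertex Set Reconfiguration} (vertex covers of $H$ become directed feedback vertex sets of $\vec H$), and the same vertex-splitting gadget with high-multiplicity parallel arcs for \textsc{Directed Feedback Arc Set Reconfiguration}. The paper uses $|V|+1$ parallel copies and phrases the second reduction as being ``from \textsc{Directed Feedback Vertex Set Reconfiguration},'' which it has just shown hard; this is exactly your construction composed with the previous step.

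The only substantive divergence is the backward direction of the arc-set reduction. The paper projects each feedback arc set $Y_i$ onto its internal arcs, obtains a vertex set $X'_i$, verifies that each $X'_i$ is a feedback vertex set of the source digraph, and simply asserts that this ``suffices''; it does not address that $|X'_i|$ may drop below $k$ when $Y_i$ parks tokens on bundle arcs, so the projected sequence is not literally a token-jumping sequence. You flag precisely this issue and propose to absorb the size fluctuation by reducing from the TAR model instead. On this point you are more careful than the paper's own write-up. Your TAR workaround is sound in outline; the bookkeeping you anticipate---matching the TAR threshold to the number of parkable tokens, and dealing with the internal-to-internal swaps that can occur at full size $k$ (which are single TJ moves but not single TAR moves)---is real but routine once the parameters are set.
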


\begin{proof}
By an analogous argument in \Cref{thm:pspace-comp-path}, these problems belong to PSPACE.

It is easy to observe that {\sc Directed Feedback Vertex Set Reconfiguration} is PSPACE-hard.
To see this, consider an undirected graph $G = (V, E)$ and the directed graph $D = (V, A)$ obtained from $G$ by replacing all undirected edge $\{u, v\}$ with two arcs $(u, v)$ and $(v, u)$.
Observe that every vertex cover of $G$ is also a directed feedback vertex set of $D$ and vice versa.
By the PSPACE-hardness of reconfiguring independent sets~\cite{IDHPSUU11}, {\sc Directed Feedback Vertex Set Reconfiguration} is PSPACE-hard.

To prove the PSPACE-hardness of {\sc Directed Feedback Arc Set Reconfiguration},
we perform a standard polynomial-time reduction from
{\sc Directed Feedback Vertex Set Reconfiguration}.

Let $G = (V, A)$ be a directed graph.
We construct a directed multigraph $G' = (V', A')$ as follows.
We first add a pair of copies $\{v^{\rm in}, v^{\rm out}\}$ for each $v \in V$ and add an arc $(v^{\rm in}, v^{\rm out})$ to $G'$.
We call this arc an \emph{internal arc} of $v$.
The vertex set of $G'$ is defined as $V' = \bigcup_{v \in V} \{v^{\rm in}, v^{\rm out}\}$.
For $(u, v) \in A$, add $|V| + 1$ parallel arcs $(u^{\rm out}, v^{\rm in})$ to $G'$.
For two (directed) feedback vertex sets $\source{X}$ and $\sink{X}$ in $G$ with $|\source{X}| = |\sink{X}| = k$, $\source{Y}$ and $\sink{Y}$ defined as the sets of internal arcs corresponding to $\source{X}$ and $\sink{X}$, respectively.
Now, we show that $G$ contains a reconfiguration sequence of feedback vertex sets between $\source{X}$ and $\sink{X}$ in $G$ if and only if there is a reconfiguration sequence of (directed) feedback arc sets between $\source{Y}$ and $\sink{Y}$ in $G'$.

Since $X$ is a feedback vertex set of $G$, the corresponding internal arc set $Y$ is a feedback arc set of $G'$.
Thus, the forward implication is straightforward.
Conversely, suppose that there is a reconfiguration sequence $\langle Y_0, Y_1, \ldots, Y_\ell\rangle$ between $Y_1 = \source{Y}$ and $Y_\ell = \sink{Y}$ such that all the intermediate sets $Y_i$ are feedback arc sets of $G'$.
For $0 \le i \le \ell$, let $Y'_i$ be the set of internal arcs in $Y_i$ and $X'_i$ be the set of vertices in $G$, each of which corresponds to an (internal) arc in $Y'_i$.
To prove the backward implication, it suffices to show that $X'_i$ is a feedback vertex set of $G'$.
To see this, suppose that there is a directed cycle $C$ in $G[V \setminus X'_i]$.
For every arc $(u, v)$ in $C$, there is at least one arc from $u^{\rm out}$ to $v^{\rm in}$ in $G' - Y_i$ as there are $|V| + 1$ copies there.
Thus, the cycle also induces a directed cycle in $G' - Y_i$, contradicting the fact that $Y_i$ is a feedback arc set of $G'$.
\end{proof}

\subsection{Strongly connected graphs}\label{sec:strong}
In \Cref{sec:path,sec:dag}, we have considered acyclic properties $\Pi$.
As another direction, we consider the case where $\Pi$ is the property of being strongly connected in this subsection.
A directed graph is \emph{strongly connected} if for any two vertices $u$ and $v$, the graph contains directed paths from $u$ to $v$ and from $v$ to $u$.
We consider two variants: {\sc Strongly Connected Vertex Set Reconfiguration} and {\sc Strongly Connected Arc Set Reconfiguration}.
In the vertex variant, we are given two subsets $\source{V}$ and $\sink{V}$ of $V(G)$ and asked whether there is a sequence of subsets $\langle \source{V} = V_0, V_1, \ldots, V_\ell = \sink{V} \rangle$ of $V(G)$ such that $G[V_i]$ is strongly connected for all $0 \le i \le \ell$ and $|V_i \setminus V_{i + 1}| = |V_{i + 1} \setminus V_i| = 1$ for all $0 \le i < \ell$.
The arc variant is defined in an analogous way for arc subsets: We are given two subsets $\source{A}$ and $\sink{A}$ of $A(G)$ and asked whether there is a sequence of subsets $\langle \source{A} = A_0, A_1, \ldots, A_\ell = \sink{A} \rangle$ of $A(G)$ such that the subgraph $G[A_i]$ induced by $A_i$ forms is strongly connected for all $0 \le i \le \ell$ and $|A_i \setminus A_{i + 1}| = |A_{i + 1} \setminus A_i| = 1$ for all $0 \le i < \ell$.
In this subsection, we will show that the vertex variant is PSPACE-complete, and the arc variant is NP-hard for oriented graphs.

For the vertex variant, we show a reduction from {\sc Shortest Path Reconfiguration}, which is known to be PSPACE-complete~\cite{Bonsma13}.
Our reduction is similar to that of Hanaka et al.~\cite{HIMMNSSV20} for {\sc Induced Path Reconfiguration} in undirected graphs.
In {\sc Shortest Path Reconfiguration}, we are given a simple undirected graph $G$, with specified vertices $p$ and $q$, and two subsets $\source{V}$ and $\sink{V}$ of $V(G)$ that are (induced) shortest $p$-$q$ paths in $G$.
The question is whether there exists a sequence of vertex subsets $\langle \source{V} = V_0, V_1, \ldots, V_\ell = \sink{V} \rangle$ of $V(G)$ such that $V_i$ is a shortest $p$-$q$ path for all $0 \le i \le \ell$ and $|V_i \setminus V_{i + 1}| = |V_{i + 1} \setminus V_i| = 1$ for all $0 \le i < \ell$.

Let $d$ be the length of a shortest $p$-$q$ path in $G$.
For $i \in \{0, 1, \dots, d\}$, we denote by $L_i \subseteq V(G)$ the set of vertices such that the distance from $p$ is $i$ and that to $q$ is $d - i$.
It follows that $L_0 = \{p\}$ and $L_d = \{q\}$.
We call each $L_i$ a \emph{layer}.
Observe that every shortest $p$-$q$ path contains exactly one vertex from each layer.
By this observation, we can assume without loss of generality that every vertex in $G$ belongs to some layer, and every edge of $G$ joins vertices in adjacent layers, that is, for every edge $\{u, v\}$ there exists $i \in \{0, 1, \dots, d-1\}$ such that $u \in L_i$ and $v \in L_{i+1}$.

\begin{theorem}\label{thm:vertex-strong}
    {\sc Strongly Connected Vertex Set Reconfiguration} is PSPACE-complete.
\end{theorem}
\begin{proof}
    By an analogous argument in \Cref{thm:pspace-comp-path}, the problem belongs to PSPACE.
    
    Given an instance $I = (G, p, q, \source{V}, \sink{V})$ of {\sc Shortest Path Reconfiguration}, we construct a directed graph $D$ by orienting the edges in $G$ from $L_i$ to $L_{i+1}$ for every $i \in \{0, 1, \dots, d-1\}$ and adding an arc $(q, p)$, which can be done in polynomial time.
    In $D$, both $D[\source{V}]$ and $D[\sink{V}]$ are strongly connected because they induce directed cycles.
    Moreover, $V' \subseteq V(G) (= V(D))$ induces a shortest $p$-$q$ path in $G$ if and only if $V'$ induces a directed cycle in $D$.
    Therefore, $I$ is a yes-instance if and only if $(D, \source{V}, \sink{V})$ is a yes-instance of {\sc Strongly Connected Vertex Set Reconfiguration}.
\end{proof}

Next, we show that the arc variant is NP-hard.
In this variant, we can assume that the given two arc sets are spanning without loss of generality.
This is because of the following reasons:
If there exist two adjacent arc sets $A_i$ and $A_{i+1}$ with $v \in V(G[A_i])$ and $v \notin V(G[A_{i+1}])$ in a reconfiguration sequence, $A_i$ contains a single arc whose endpoint is $v$, indicating that $G[A_i]$ is not strongly connected.
Symmetrically, there exist no two adjacent arc sets with $v \notin V(G[A_i])$ and $v \in V(G[A_{i+1}])$.

\renewcommand\vec[1]{\overrightarrow{#1}}
\newcommand\cev[1]{\overleftarrow{#1}}

\begin{theorem}
    {\sc Strongly Connected Arc Set Reconfiguration} is NP-hard for oriented graphs.
\end{theorem}
\begin{proof}
    We show a reduction from {\sc Directed Hamiltonian Cycle}, which is NP-complete even for oriented graphs~\cite{Plesnik79}.
    The input of the problem is a directed graph $G = (V, A)$ with $n = |V(G)|$ vertices and $m = |A(G)|$ arcs, and the question is whether $G$ contains a Hamiltonian cycle, that is, a spanning directed cycle.
    We assume that $m \ge n$. (Otherwise, the problem is trivial.)
    Given $G$, we construct a directed graph $H$ as follows.
    For $k = 2(m-n) + 3$, we define a directed graph $D_k$ with the vertex set $V(D_k) = [k]$, where $[k] = \{1, 2, \dots, k\}$.
    For convenience, the addition $+$ and subtraction $-$ are taken over modulo $k$,
    i.e., $k + 1$ is regarded as $1$ and $0$ is regarded as $k$.
    We define $\vec{e_i} = (i, i + 1)$, $\cev{e_i} = (i, i - 2)$, $\vec{C_k} = ([k], \{\vec{e_i} \colon i \in [k]\})$, and $\cev{C_k} = ([k], \{\cev{e_i} \colon i \in [k]\})$.
    Both $\vec{C_k}$ and $\cev{C_k}$ form directed cycles because $k$ is odd. 
    The arc set $A(D_k)$ is the disjoint union of $A(\vec{C_k})$ and $A(\cev{C_k})$.
    Given $G$, we define $H$ as the graph obtained by choosing arbitrary one vertex from each $G$ and $D_k$ and then identifying them.
    In the following, we write simply $\vec{C}$ and $\cev{C}$ to denote $A(\vec{C_k})$ and $A(\cev{C_{k}})$, respectively.
    Let $S = A(G) \cup \vec{C}$ and $S' = A(G) \cup \cev{C}$.
    We show that $G$ is a yes-instance of {\sc Directed Hamiltonian Cycle} if and only if $(H, S, S')$ is a yes-instance of {\sc Strongly Connected Arc Set Reconfiguration}.
    
    Suppose that $G$ is a yes-instance of {\sc Directed Hamiltonian Cycle}.
    Let $B \subseteq A(G)$ be the arc set of a Hamiltonian cycle in $G$.
    Then, we can reconfigure $S$ into $S'$ in $H$ as follows.
    First, we move $m - n$ arcs in $A(G) \setminus B$ to $\cev{C}$ so that they form a path.
    Next, we add an arc $\cev{e}$ in $\cev{C}$ so that the path is extended in the forward direction and then remove the arc in $\vec{C}$ directed to the head of $\cev{e}$.
    We repeat this procedure $m - n + 3$ times.
    In each step, the arc set is strongly connected because the last $(|\cev{C}| - 1) / 2$ arcs in the extended path starts at the tail of the removed arc and ends at its head.
    Finally, we move the remaining $m - n$ arcs in $\vec{C}$ to $A(G) \setminus B$ and obtain $S'$.
    
    \begin{figure}[t]
        \centering
        \includegraphics[width=0.6\linewidth]{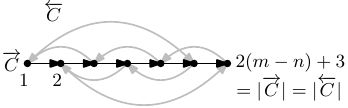}
        \caption{An illustration for the proof of \Cref{clm:cycle}.}
        \label{fig:sc_edge_oriented}
    \end{figure}
    
    Suppose that $(H, S, S')$ is a yes-instance of {\sc Strongly Connected Arc Set Reconfiguration}.
    Let $\langle S = S_0, S_1, \dots, S_{\ell} = S' \rangle$ be a reconfiguration sequence and $i$ be the integer such that $\vec{C} \subseteq S_j$ for $0 \le j \le i$ and $\vec{C} \not\subseteq S_{i+1}$.
    \let\origqedsymbol\qedsymbol
    \renewcommand{\qedsymbol}{$\vartriangleleft$}
    \begin{claim}\label{clm:cycle}
        $|S_i \cap \cev{C}| \ge \frac{|\cev{C}| - 3}{2}\ (= m - n)$.
    \end{claim}
    \begin{proof}
        If $|S_i \cap \cev{C}| < \frac{|\cev{C}| - 3}{2}$, then $|S_{i+1} \cap \cev{C}| \le \frac{|\cev{C}| - 3}{2}$.
        Since $|S_{i+1} \cap \vec{C}| = |\vec{C}| - 1$, $S_{i+1} \cap (\vec{C} \cup \cev{C})$ is a disjoint union of a directed path consisting of $|\vec{C}| - 1$ arcs from $\vec{C}$ and at most $\frac{|\cev{C}| - 3}{2}$ arcs from $\cev{C}$.
        In other words, $S_{i+1} \cap (\vec{C} \cup \cev{C})$ is an arc set obtained from the graph consisting of black arcs in \Cref{fig:sc_edge_oriented} by adding at most $\frac{|\cev{C}| - 3}{2}$ gray arcs from $\cev{C}$ to make the graph strongly connected.
        In the figure, the vertices are renamed so that the black path starts at $1$ and ends at $|\vec{C}|$.
        To make the graph strongly connected, it is necessary (and sufficient) that by adding arcs the resultant graph has a path from $|\vec{C}|$ to $1$.
        However, every path from $|\vec{C}|$ to $1$ in $(\vec{C} \cup \cev{C}) \setminus \{(|\vec{C}|, 1)\}$ uses at least $\frac{|\cev{C}| - 1}{2}$ arcs from $\cev{C}$.
    \end{proof}
    \renewcommand{\qedsymbol}{\origqedsymbol}
    By \Cref{clm:cycle}, it follows that $|S_i \cap A(G)| \le n$ because
    \begin{align*}
        |S_i \cap A(G)|
        &= |S_i| - |S_i \cap \vec{C}| - |S_i \cap \cev{C}|\\
        &\le |S_i| - |\vec{C}| - \frac{|\cev{C}| - 3}{2}\\
        &= (m + 2(m - n) + 3) - (2(m - n) + 3) - (m - n) = n.
    \end{align*}
    For $B = S_i \cap A(G)$, $G[B]$ is strongly connected because the common vertex of $G$ and $D_k$ is a cut vertex.
    Since $|B| \le n$ and $|V(G)| = n$, $B$ forms a Hamiltonian cycle in $G$.
\end{proof}

\section{Concluding Remarks}

There are several possible open questions related to our results.
Contrary to the cases of spanning arborescences and spanning $r$-arborescences, the sets of arborescences and $r$-arborescences with $k < |V| - 1$ arcs do not satisfy the weak exchange property, which makes {\sc Arborescence Shortest Reconfiguration} highly nontrivial.
{\sc Arborescence Shortest Reconfiguration} would be a notable open question arising in our work.
It would be also interesting to know whether {\sc Directed Path Reconfiguration} and {\sc Directed Path Sliding} are fixed-parameter tractable (FPT) when parameterized by the length of input paths.
Although the undirected counterparts are known to be FPT~\cite{DEHJLUU19,GSZ20}, it would be difficult to apply their techniques directly to our cases.
Another question is whether \textsc{Strongly Connected Arc Set Reconfiguration} belongs to NP or is PSPACE-complete.
We have shown that the problem is NP-hard, while the vertex variant is PSPACE-complete.

\section*{Acknowledgment}
We thank Anna Lubiw and one of the reviewers for pointing out the work of Lov\'{a}sz~\cite{lovasz1977homology}, which is related to our result.
We also thank the reviewers for helpful comments.

\printbibliography
\end{document}